\documentclass[english,a4paper,times,round]{article}
\usepackage[T1]{fontenc}
\usepackage[latin9]{inputenc}
\usepackage{xcolor}
\usepackage{pdfcolmk}
\usepackage{verbatim}
\usepackage{textcomp}
\usepackage{url}
\usepackage{amsthm}
\usepackage{amsmath}
\usepackage{amssymb}
\usepackage{graphicx}
\usepackage{esint}
\usepackage[authoryear]{natbib}
\PassOptionsToPackage{normalem}{ulem}
\usepackage{ulem}

\makeatletter

\providecommand{\tabularnewline}{\\}
\providecolor{lyxadded}{rgb}{0,0,1}
\providecolor{lyxdeleted}{rgb}{1,0,0}

\theoremstyle{plain}
\newtheorem{thm}{\protect\theoremname}
  \theoremstyle{plain}
  \newtheorem{prop}[thm]{\protect\propositionname}

\usepackage{fullpage}
\usepackage{clrscode3e}
\usepackage{textcomp}

\newtheorem{algorithm}{Algorithm}

\renewcommand{\For}{\kw{for}\,\Indentmore}

\graphicspath{{figs/}}

\makeatother

\usepackage{babel}
  \providecommand{\propositionname}{Proposition}
\providecommand{\theoremname}{Theorem}

\begin{document}

\title{Sequential Monte Carlo with Highly Informative Observations}

\author{Pierre Del Moral\thanks{P. Del Moral is with the University of New South Wales.}~~and~Lawrence
M. Murray\thanks{L. M. Murray is with the University of Oxford.}}
\maketitle
\begin{abstract}
We propose sequential Monte Carlo (SMC) methods for sampling the posterior
distribution of state-space models under highly informative observation
regimes, a situation in which standard SMC methods can perform poorly.
A special case is simulating bridges between given initial and final
values. The basic idea is to introduce a schedule of intermediate
weighting and resampling times between observation times, which guide
particles towards the final state. This can always be done for continuous-time
models, and may be done for discrete-time models under sparse observation
regimes; our main focus is on continuous-time diffusion processes.
The methods are broadly applicable in that they support multivariate
models with partial observation, do not require simulation of the
backward transition (which is often unavailable), and, where possible,
avoid pointwise evaluation of the forward transition. When simulating
bridges, the last cannot be avoided entirely without concessions,
and we suggest an $\epsilon$-ball approach (reminiscent of Approximate
Bayesian Computation) as a workaround. Compared to the bootstrap particle
filter, the new methods deliver substantially reduced mean squared
error in normalising constant estimates, even after accounting for
execution time. The methods are demonstrated for state estimation
with two toy examples, and for parameter estimation (within a particle
marginal Metropolis--Hastings sampler) with three applied examples
in econometrics, epidemiology and marine biogeochemistry.
\end{abstract}

\section{Introduction}

Consider the multivariate and continuous-time Markov process $X(t)\in\mathbb{R}^{d}$
and parameters $\Theta$. For a sequence of times $t_{0},\ldots,t_{n}$
we write $X_{0:n}\equiv\{X_{0},\ldots,X_{n}\}\equiv\{X(t_{0}),\ldots,X(t_{n})\}$,
and adopt the convention that uppercase symbols denote random variables,
with matching lowercase symbols realisations of them. We have $X_{k}\sim p(dx_{k}\,|\,x_{k-1},\theta)$.
The process may be observed indirectly via some $Y_{n}\sim p(dy_{n}\,|\,x_{n},\theta)$,
or directly as some given initial value $x_{0}$ and final value $x_{n}$.
This setup admits continuous-time models, where $n$ can be made arbitrarily
large and so $t_{1},\ldots,t_{n-1}$ arbitrarily dense, and discrete-time
models with sparse observation, where $n$ cannot be made arbitrarily
large, but where times $t_{1},\ldots,t_{n-1}$ are unobserved.

In the case of indirect observation, the problem of interest is to
simulate $X_{0:n}\sim p(dx_{0:n}\,|\,y_{n},\theta)$ and perhaps estimate
the normalising constant (marginal likelihood) $p(y_{n}\,|\,\theta)$.
We are particularly interested in the case that the observation is
in some sense highly informative on $X_{0:n}$. This might arise when
highly accurate measurements are taken in controlled settings, or
where observations are relatively sparse in time. It will be adequate
in this work for the meaning of ``highly informative'' to remain
qualitative---vague even---but if it were to be quantified it might
be defined as some large divergence (e.g. Kullback--Leibler) of $p(dx_{n}\,|\,\theta)$
from $p(dx_{n}\,|\,y_{n},\theta)$.

In the case of direct observation, the problem of interest is to simulate\emph{
bridges} $X_{1:n-1}\sim p(dx_{1:n-1}\,|\,x_{0},x_{n},\theta)$ and
perhaps estimate the normalising constant (transition density) $p(x_{n}\,|\,x_{0},\theta)$.
This might be seen as the special case of indirect observation with
$p(y_{n}\,|\,x_{n})=\delta(y_{n}-x_{n})$, where $\delta$ is the
Dirac $\delta$ function. We motivate the approach with this special
case, and return to the more general case later.

The simulation of bridges is generally regarded as a difficult problem,
and there is a wealth of literature in the area. Most approaches,
while not necessarily limited to such, begin with a diffusion process
satisfying the Itô stochastic differential equation (SDE)
\[
dX(t)=a(X(t),t,\theta)\,dt+B(X(t),t,\theta)\,dW(t),
\]
where $a(X(t),t,\theta)$ is the drift vector, $B(X(t),t,\theta)$
the diffusion matrix and $W(t)$ a vector of standard Wiener processes.
One group of methods proceeds with a schedule of times $t_{0},\ldots,t_{n}$,
equispaced at a sufficiently small step size that a locally linear--Gaussian
Euler--Maruyama~\citep[\S9.1]{Kloeden1992} discretisation of the
original dynamics makes a credible approximation:
\[
p(x_{k}\,|\,x_{k-1},\theta):=\mathcal{\phi}\left(x_{k-1}+a(x_{k-1},t_{k-1},\theta)\Delta t_{k},\,B(x_{k-1},t_{k-1},\theta)B(x_{k-1},t_{k-1},\theta)^{\top}\Delta t_{k}\right).
\]
Here, $k=1,\ldots,n$, $\Delta t_{k}=t_{k}-t_{k-1}$ and $\phi(\mu,\Sigma)$
is the probability density function of the normal distribution with
mean vector $\mu$ and covariance matrix $\Sigma$. Because of the
convenience of this closed form for $p(x_{k}\,|\,x_{k-1},\theta)$,
the discretised version may be adopted in place of the continuous
version (as an \textsl{a priori} approximation) before proceeding
with inference. This facilitates various importance sampling~\citep[e.g.][]{Pedersen1995,Durham2002},
sequential Monte Carlo (SMC,~e.g. \citealp{Lin2010}) and Markov
chain Monte Carlo (MCMC,~e.g. \citealp{Roberts2001,Elerian2001,Eraker2001,Golightly2006})
methods for simulating bridges.

A caution is warranted around the use of Euler--Maruyama, however.
While the discretisation provides a convenient approximation for pointwise
evaluation of $p(x_{k}\,|\,x_{k-1},\theta)$, it can be unstable for
simulation unless the step size is very small---too small for efficient
computation. In such cases higher-order schemes, such as the Milstein
and semi-implicit schemes~\citep{Kloeden1992}, or those of the Runge--Kutta
family, are to be preferred, and indeed may be essential. For any
given step size $\Delta t$, these higher-order schemes have stability
regions at least as large as that of Euler--Maruyama. For particularly
stiff problems, implicit schemes may also need to be considered.

Rather than discretising to yield a closed-form transition density,
a closed-form Radon--Nikodym derivative can be derived in certain
conditions. This leads to another family of methods for simulating
bridges~\citep[e.g.][]{Clark1990,Delyon2006,Bayer2013,Schauer2013}.
The exact algorithm~\citep{Beskos2006} is yet another alternative,
and has the advantage of not introducing discretisation error, but
its requirements restrict the set of models to which it can be applied,
especially the set of multivariate models.

The estimation of the normalising constant $p(x_{n}\,|\,x_{0},\theta)$
usually falls out as a straightforward expectation in importance sampling
methods, including SMC. It is more difficult with MCMC. A number of
works have focused more closely on this, usually in the context of
obtaining normalising constant estimates for parameter estimation~\citep[e.g.][]{Fearnhead2008,Fearnhead2010a,Sun2013}.

The methods proposed in this work are most similar to the SMC methods
of \citet{DelMoral2005}, and \citet{Lin2010}. The former considers
the simulation of rare events, not bridges, albeit with similar mechanisms.
The latter does consider the simulation of bridges, but the implementation
has some limitations, which this work seeks to ameliorate. In particular,
\citet{Lin2010} requires that pilot samples can be initialised at
time $t_{n}$ and simulated backwards in time to guide those samples
being simulated forwards in time. This is problematic in cases where
the backwards transition cannot be discretised in a numerically stable
way. In contrast, the proposed methods do not require simulation of
the backwards transition. In addition, \citet{Lin2010} does not support
indirect observation, and in the case of multivariate models and direct
observation, does not support partial observation of the state vector
$x_{n}$. The proposed methods accommodate both of these cases. In
the context of indirect observation, the methods are similar to the
auxiliary particle filter~\citep{Pitt1999}, but perform lookahead
steps at multiple intermediate times between observations, rather
than at observation times only.

The remainder of this work describes the proposed methods for sampling
from state-space models with highly informative observations, including
the special case of simulating bridges. The methods also permit estimation
of normalising constants. The basic idea is to simulate particles
forward in time using only the prior $p(dx_{k}\,|\,x_{k-1},\theta)$,
discretised with a higher-order scheme, but to introduce additional
weighting and resampling steps at intermediate times to guide particles
towards the final state. The methods are simple to apply, with the
following properties that make them useful for a broad range of problems:
\begin{enumerate}
\item They work in a multivariate setting.
\item They support partial observation of the state vector $x_{n}$.
\item They do not require that the backward transition can be simulated.
\item They support higher-order discretisations of the forward state process
than that of Euler--Maruyama.
\item They require only that the forward transition can be simulated and
not that its probability density function $p(x_{k}\,|\,x_{k-1},\theta)$
can be evaluated pointwise. This comes with the caveat that in the
special case of simulating bridges, a workaround is needed to approximately
evaluate, or avoid the evaluation of, $p(x_{n}\,|\,x_{n-1},\theta)$.
In one example we use an Euler--Maruyama discretisation to approximately
evaluate, but not simulate, $p(x_{n}\,|\,x_{n-1},\theta)$. In another
we concede an $\epsilon$-ball around $x_{n}$, where $\epsilon$
is commensurate with the discretisation error already surrendered
by the numerical integrator. The latter strategy resembles a simple
Approximate Bayesian Computation (ABC) algorithm~\citep{Beaumont2002}.
\end{enumerate}
Three methods are introduced in this work, all similar but tailored
for slightly different circumstances. §\ref{sec:methods} provides
the formal background to establish the methods. §\ref{sec:implementation}
provides pseudocode for their implementation and discusses design
of the required weight functions. §\ref{sec:results} provides empirical
results for the methods on two toy examples and three applications
in econometrics, epidemiology and marine biogeochemistry. §\ref{sec:discussion}
draws these results together and reports on experiences in tuning
the methods. §\ref{sec:conclusion} concludes.

\section{Methods\label{sec:methods}}

We derive three methods in this section, all quite related, but for
different circumstances of the model and data. References to the parameters,
$\Theta$, are omitted throughout for brevity. Likewise, observation
of the final value $x_{n}$ may be full or partial; the methods support
both cases, but for simplicity no notational distinction is made.

Firstly note:
\begin{eqnarray*}
p(x_{1:n-1}\,|\,x_{0},x_{n}) & = & \frac{p(x_{n}\,|\,x_{n-1})p(x_{1:n-1}\,|\,x_{0})}{p(x_{n}\,|\,x_{0})}\\
 & \propto & p(x_{n}\,|\,x_{n-1})p(x_{1:n-1}\,|\,x_{0}).
\end{eqnarray*}
This forms the basis of the importance sampling method of \citet{Pedersen1995},
proposing from $p(dx_{1:n-1}\,|\,x_{0})$ and weighting with $p(x_{n}\,|\,x_{n-1})$.
The use of the prior as the proposal in this way is myopic of $x_{n}$,
and while often workable, the approach can lead to excessive variance
in importance weights and subsequent computational expense. In \citet{Durham2002},
an alternative proposal is suggested, adjusting the drift term of
the SDE with a linear component to draw the process towards $x_{n}$.
The transition densities $p(x_{k}\,|\,x_{k-1})$ must then be estimated
in order to compute weights, and the low-order Euler--Maruyama discretisation
is employed to achieve this. One, perhaps overlooked, advantage of
\citet{Pedersen1995} is that $p(x_{k}\,|\,x_{k-1})$ does not need
to be evaluated pointwise except for the weight at $k=n$. This means
that, for the purposes of simulation, higher-order discretisation
schemes can be admitted. The methods of this work enjoy the same property.

Further observe:
\begin{equation}
p(x_{1:n-1}\,|\,x_{0})=\prod_{k=1}^{n-1}p(x_{k}\,|\,x_{k-1})\label{eq:markov}
\end{equation}
and
\begin{equation}
\frac{p(x_{n}\,|\,x_{n-1})}{p(x_{n}\,|\,x_{0})}=\prod_{k=1}^{n-1}\frac{p(x_{n}\,|\,x_{k})}{p(x_{n}\,|\,x_{k-1})},\label{eq:weights}
\end{equation}
and define
\[
G_{k}(x_{k-1:k}):=\frac{p(x_{n}\,|\,x_{k})}{p(x_{n}\,|\,x_{k-1})}.
\]
One can then incrementally simulate $p(dx_{k}\,|\,x_{k-1})$ and weight
with $G_{k}(x_{k-1:k})$, for $k=1,\ldots,n-1$. The basis of an SMC
method is to do precisely this, maintaining a population of samples
(particles) and introducing a selection mechanism to resample from
amongst them, according their weights, at each increment. \citet{Lin2010}
does this; the development below follows a similar path, but we will
ultimately suggest a different implementation, and extend the idea
from the special case of sampling bridges to the more general case
of sampling under highly informative observations.
\begin{prop}
\label{prop:exact}For any bounded function $f$ on $(\mathbb{R}^{d})^{n-1}$,
we have
\begin{eqnarray*}
\mathbb{E}\left[f(X_{1:n-1})\,|\,x_{0},x_{n}\right] & = & \mathbb{E}\left[f(X_{1:n-1})\prod_{k=1}^{n-1}G_{k}(X_{k-1:k})\,\middle|\,x_{0}\right]
\end{eqnarray*}
with the weight functions 
\[
G_{k}(x_{k-1:k}):=\frac{p(x_{n}\,|\,x_{k})}{p(x_{n}\,|\,x_{k-1})}.
\]
\end{prop}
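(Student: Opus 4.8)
The plan is to reduce the claimed identity to the Bayes--Markov factorisation already displayed at the opening of this section, exploiting the fact that the product of weight functions telescopes. First I would rewrite the right-hand side as an integral: conditioning on $x_{0}$, the law of $X_{1:n-1}$ is the Markov factorisation~\eqref{eq:markov}, so
\[
\mathbb{E}\left[f(X_{1:n-1})\prod_{k=1}^{n-1}G_{k}(X_{k-1:k})\,\middle|\,x_{0}\right]=\int f(x_{1:n-1})\left(\prod_{k=1}^{n-1}G_{k}(x_{k-1:k})\right)p(dx_{1:n-1}\,|\,x_{0}).
\]

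Next I would collapse the product. By the definition of $G_{k}$ and the telescoping identity~\eqref{eq:weights}, the numerator of each factor cancels the denominator of the next, leaving
\[
\prod_{k=1}^{n-1}G_{k}(x_{k-1:k})=\frac{p(x_{n}\,|\,x_{n-1})}{p(x_{n}\,|\,x_{0})},
\]
where I have used that $X_{0}=x_{0}$ is held fixed. Substituting this back, the integrand becomes $f(x_{1:n-1})\,p(x_{n}\,|\,x_{n-1})\,p(dx_{1:n-1}\,|\,x_{0})/p(x_{n}\,|\,x_{0})$, which by the opening identity $p(x_{1:n-1}\,|\,x_{0},x_{n})=p(x_{n}\,|\,x_{n-1})\,p(x_{1:n-1}\,|\,x_{0})/p(x_{n}\,|\,x_{0})$ equals $f(x_{1:n-1})\,p(dx_{1:n-1}\,|\,x_{0},x_{n})$. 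Integrating recovers $\mathbb{E}[f(X_{1:n-1})\,|\,x_{0},x_{n}]$, the left-hand side, which completes the argument.

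I expect the substantive content to lie not in any deep step but in confirming well-definedness, since the manipulation is purely algebraic. The identity presupposes $p(x_{n}\,|\,x_{0})>0$, i.e. a strictly positive transition density, which I would carry as a standing assumption. Boundedness of $f$, together with the fact that the total weight integrates to one---because $\int p(x_{n}\,|\,x_{n-1})\,p(dx_{1:n-1}\,|\,x_{0})=p(x_{n}\,|\,x_{0})$ by Chapman--Kolmogorov---ensures both expectations are finite and the change of measure is legitimate. The only point I would flag is that the intermediate factors $p(x_{n}\,|\,x_{k})$ for $1\le k\le n-2$ enter the telescoping but cancel, so they contribute nothing to the value of the identity; they are, however, exactly the quantities the sequential scheme exploits to guide particles towards $x_{n}$ one increment at a time, which is why the result is stated in incremental product form rather than in the collapsed form.
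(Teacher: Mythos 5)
Your proof is correct and is essentially the paper's own argument run in reverse: the paper expands $\mathbb{E}[f(X_{1:n-1})\,|\,x_{0},x_{n}]$ and rewrites the conditional density via the Bayes--Markov factorisation and the telescoping identity~(\ref{eq:weights}), whereas you start from the weighted expectation and collapse the same telescoping product---identical ingredients, opposite direction. Your added remarks on positivity of $p(x_{n}\,|\,x_{0})$ and finiteness are sensible housekeeping the paper leaves implicit, but they do not change the substance.
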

\begin{proof}
The conditional density of the random path $X_{1:n-1}$ given initial
state $X_{0}=x_{0}$ and final value $X_{n}=x_{n}$ is
\begin{eqnarray*}
p(x_{1:n-1}\,|\,x_{0},x_{n}) & = & \frac{{\displaystyle p(x_{n}\,|\,x_{n-1})p(x_{1:n-1}\,|\,x_{0})}}{p(x_{n}\,|\,x_{0})}\\
 & = & {\displaystyle \left\{ \prod_{k=1}^{n-1}\frac{p(x_{n}\,|\,x_{k})}{p(x_{n}\,|\,x_{k-1})}\right\} p(x_{1:n-1}\,|\,x_{0})}\\
 & = & \left\{ \prod_{k=1}^{n-1}G_{k}(x_{k-1:k})\right\} p(x_{1:n-1}\,|\,x_{0}).
\end{eqnarray*}
Thus for any bounded function $f$ on $(\mathbb{R}^{d})^{n-1}$ we
have
\begin{eqnarray*}
\mathbb{E}\left[f(X_{1:n-1})\,|\,x_{0},x_{n}\right] & = & \int_{\mathbb{R}^{d(n-1)}}f(x_{1:n-1})p(x_{1:n-1}\,|\,x_{0},x_{n})\,dx_{1:n-1}\\
 & = & \int_{\mathbb{R}^{d(n-1)}}f(x_{1:n-1})\left\{ \prod_{k=1}^{n-1}G_{k}(x_{k-1:k})\right\} p(x_{1:n-1}\,|\,x_{0})\,dx_{1:n-1}\\
 & = & \mathbb{E}\left[f(X_{1:n-1})\prod_{k=1}^{n-1}G_{k}(X_{k-1:k})\,\middle|\,x_{0}\right].
\end{eqnarray*}

\end{proof}
Of course, rarely will the weights $G_{k}(x_{k-1:k})$ be computable
in practice; Proposition \ref{prop:exact} is conceptually appealing,
however, and we can try to imitate it in other circumstances. To do
this, we introduce arbitrary weighting functions that facilitate an
SMC algorithm.
\begin{prop}
\label{prop:importance}For any bounded function $f$ on $(\mathbb{R}^{d})^{n-1}$,
we have 
\[
\begin{array}{l}
\mathbb{E}\left[f(X_{1:n-1})\,|\,x_{0},x_{n}\right]={\displaystyle \frac{\mathbb{E}\left[f(X_{1:n-1})p(x_{n}\,|\,X_{n-1})\frac{q(x_{n}\,|\,x_{0})}{q(x_{n}\,|\,X_{n-1})}\prod_{k=1}^{n-1}H_{k}(X_{k-1:k})\,\middle|\,x_{0}\right]}{\mathbb{E}\left[p(x_{n}\,|\,X_{n-1})\frac{q(x_{n}\,|\,x_{0})}{q(x_{n}\,|\,X_{n-1})}\prod_{k=1}^{n-1}H_{k}(X_{k-1:k})\,\middle|\,x_{0}\right]}}\end{array}.
\]
with the weight functions
\[
H_{k}(x_{k-1:k}):=\frac{q(x_{n}\,|\,x_{k})}{q(x_{n}\,|\,x_{k-1})}
\]
and chosen positive functions $q(x_{n}\,|\,x_{k})$ for $k=0,\ldots,N-1$.\end{prop}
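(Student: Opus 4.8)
The plan is to recognise that the auxiliary functions $q$ cancel identically, so that both the numerator and the denominator collapse to expressions already governed by Proposition~\ref{prop:exact}. First I would exploit the telescoping structure of the $H_k$, exactly as in~\eqref{eq:weights}: the product over the simulated path collapses to
\[
\prod_{k=1}^{n-1}H_k(x_{k-1:k})=\prod_{k=1}^{n-1}\frac{q(x_n\,|\,x_k)}{q(x_n\,|\,x_{k-1})}=\frac{q(x_n\,|\,x_{n-1})}{q(x_n\,|\,x_0)}.
\]

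Next I would substitute this into the weight appearing inside both expectations. The correction factor $q(x_n\,|\,x_0)/q(x_n\,|\,X_{n-1})$, multiplied by $\prod_k H_k=q(x_n\,|\,X_{n-1})/q(x_n\,|\,x_0)$, leaves exactly $1$, so the entire weight reduces to $p(x_n\,|\,X_{n-1})$. The arbitrary positive functions $q$ thus play no role whatever in the exact identity; they will matter only when an SMC approximation is built from this representation, where the $H_k$ act as guiding intermediate weights. After this cancellation the claimed right-hand side is simply
\[
\frac{\mathbb{E}\!\left[f(X_{1:n-1})\,p(x_n\,|\,X_{n-1})\,\middle|\,x_0\right]}{\mathbb{E}\!\left[p(x_n\,|\,X_{n-1})\,\middle|\,x_0\right]}.
\]

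To finish I would invoke Proposition~\ref{prop:exact}. Using the telescoped identity $\prod_{k=1}^{n-1}G_k = p(x_n\,|\,x_{n-1})/p(x_n\,|\,x_0)$ from~\eqref{eq:weights}, that proposition reads $\mathbb{E}[f(X_{1:n-1})\,|\,x_0,x_n] = p(x_n\,|\,x_0)^{-1}\,\mathbb{E}[f(X_{1:n-1})\,p(x_n\,|\,X_{n-1})\,|\,x_0]$, which identifies the numerator as $p(x_n\,|\,x_0)\,\mathbb{E}[f\,|\,x_0,x_n]$. Taking $f\equiv 1$ then gives the normalisation $\mathbb{E}[p(x_n\,|\,X_{n-1})\,|\,x_0]=p(x_n\,|\,x_0)$ (equivalently the Chapman--Kolmogorov relation), which identifies the denominator as $p(x_n\,|\,x_0)$. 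The common factor $p(x_n\,|\,x_0)$ cancels and the ratio equals $\mathbb{E}[f(X_{1:n-1})\,|\,x_0,x_n]$, as required.

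There is no serious analytic obstacle here; the content lies entirely in the algebraic cancellation rather than any technical difficulty. The only point requiring care is that the $q(x_n\,|\,x_k)$ be strictly positive, so that the ratios $H_k$ and the correction term are well defined and the denominator does not vanish --- both guaranteed by hypothesis. It is worth stating explicitly that the exactness is insensitive to the choice of $q$, since this is precisely the freedom the subsequent SMC implementation exploits.
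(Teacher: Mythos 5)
Your proof is correct and rests on exactly the same key fact as the paper's: the factors involving $q$ telescope and cancel identically, so both sides reduce to the ratio $\mathbb{E}\left[f(X_{1:n-1})\,p(x_{n}\,|\,X_{n-1})\,\middle|\,x_{0}\right]\big/\,\mathbb{E}\left[p(x_{n}\,|\,X_{n-1})\,\middle|\,x_{0}\right]$. The only difference is one of direction and bookkeeping: the paper obtains that ratio from Bayes' rule together with the Chapman--Kolmogorov identity $p(x_{n}\,|\,x_{0})=\mathbb{E}\left[p(x_{n}\,|\,X_{n-1})\,\middle|\,x_{0}\right]$ and then multiplies in the identity $\frac{q(x_{n}\,|\,x_{0})}{q(x_{n}\,|\,x_{n-1})}\prod_{k=1}^{n-1}H_{k}(x_{k-1:k})=1$, whereas you cancel the $q$-terms first and then identify the resulting ratio by citing Proposition~\ref{prop:exact} (with $f\equiv1$ supplying the denominator), which is the same argument run in reverse.
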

\begin{proof}
Observe that 
\[
p(x_{n}\,|\,x_{0})=\mathbb{E}\left[p(x_{n}\,|\,X_{n-1})\,\middle|\,x_{0}\right]
\]
so that
\begin{eqnarray*}
\mathbb{E}\left[f(X_{1:n-1})\,|\,x_{0},x_{n}\right] & = & \mathbb{E}\left[f(X_{1:n-1})\frac{p(x_{n}\,|\,X_{n-1})}{p(x_{n}\,|\,x_{0})}\,\middle|\,x_{0}\right]\\
 & = & \frac{\mathbb{E}\left[f(X_{1:n-1})p(x_{n}\,\middle|\,X_{n-1})\,|\,x_{0}\right]}{p(x_{n}\,|\,x_{0})}\\
 & = & \frac{\mathbb{E}\left[f(X_{1:n-1})p(x_{n}\,\middle|\,X_{n-1})\,|\,x_{0}\right]}{\mathbb{E}\left[p(x_{n}\,|\,X_{n-1})\,\middle|\,x_{0}\right]}.
\end{eqnarray*}
Introduce
\[
\frac{q(x_{n}\,|\,x_{0})}{q(x_{n}\,|\,x_{n-1})}\prod_{k=1}^{n-1}H_{k}(x_{k-1:k})=1.
\]
We then have
\begin{eqnarray*}
\mathbb{E}\left[f(X_{1:n-1})\,|\,x_{0},x_{n}\right] & = & \frac{\mathbb{E}\left[f(X_{1:n-1})p(x_{n}\,|\,X_{n-1})\frac{q(x_{n}\,|\,x_{0})}{q(x_{n}\,|\,X_{n-1})}\prod_{k=1}^{n-1}H_{k}(X_{k-1:k})\,\middle|\,x_{0}\right]}{\mathbb{E}\left[p(x_{n}\,|\,X_{n-1})\frac{q(x_{n}\,|\,x_{0})}{q(x_{n}\,|\,X_{n-1})}\prod_{k=1}^{n-1}H_{k}(X_{k-1:k})\,\middle|\,x_{0}\right]}.
\end{eqnarray*}

\end{proof}
The results are easily adapted to the case where the state is not
observed exactly, but rather with some noise. We introduce the random
variable $Y_{n}\sim p(dy_{n}\,|\,x_{n})$, which will be observed
in place of $X_{n}$, and a prior distribution over the starting state,
$X_{0}\sim p(dx_{0})$.
\begin{prop}
\label{prop:ssm}For any bounded function $f$ on $\mathbb{R}^{d(n+1)}$,
we have 
\[
\begin{array}{l}
\mathbb{E}\left[f(X_{0:n})\,|\,y_{n}\right]={\displaystyle \frac{\mathbb{E}\left[f(X_{0:n})p(X_{0})p(y_{n}\,|\,X_{n})\frac{r(y_{n}\,|\,X_{0})}{r(y_{n}\,|\,X_{n})}\prod_{k=1}^{n}J_{k}(X_{k-1:k})\right]}{\mathbb{E}\left[p(X_{0})p(y_{n}\,|\,X_{n})\frac{r(y_{n}\,|\,X_{0})}{r(y_{n}\,|\,X_{n})}\prod_{k=1}^{n}J_{k}(X_{k-1:k})\right]}}\end{array}.
\]
with the weight functions
\[
J_{k}(x_{k-1:k}):=\frac{r(y_{n}\,|\,x_{k})}{r(y_{n}\,|\,x_{k-1})}
\]
and chosen positive functions $r(y_{n}\,|\,x_{k})$ for $k=0,\ldots,N$.\end{prop}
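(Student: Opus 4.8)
The plan is to follow the template of Proposition~\ref{prop:importance}, adapting it to accommodate the now-random initial state $X_0$ and the noisy observation $Y_n$. First I would write down the joint prior density $p(x_{0:n})=p(x_0)\prod_{k=1}^{n}p(x_k\,|\,x_{k-1})$ and apply Bayes' rule in the form $p(x_{0:n}\,|\,y_n)=p(y_n\,|\,x_n)\,p(x_{0:n})/p(y_n)$. Recognising the marginal likelihood as an expectation, $p(y_n)=\mathbb{E}[p(y_n\,|\,X_n)]$ under the prior law of $X_{0:n}$, this immediately yields
\[
\mathbb{E}[f(X_{0:n})\,|\,y_n]=\frac{\mathbb{E}[f(X_{0:n})\,p(y_n\,|\,X_n)]}{\mathbb{E}[p(y_n\,|\,X_n)]},
\]
which is the exact analogue of the first display in the proof of Proposition~\ref{prop:importance}, with the genuine likelihood $p(y_n\,|\,X_n)$ now playing the role previously played by $p(x_n\,|\,X_{n-1})$.

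The second step is the telescoping identity. By the definition of $J_k$,
\[
\prod_{k=1}^{n}J_k(x_{k-1:k})=\prod_{k=1}^{n}\frac{r(y_n\,|\,x_k)}{r(y_n\,|\,x_{k-1})}=\frac{r(y_n\,|\,x_n)}{r(y_n\,|\,x_0)},
\]
so that $\frac{r(y_n\,|\,x_0)}{r(y_n\,|\,x_n)}\prod_{k=1}^{n}J_k(x_{k-1:k})=1$ identically, for any choice of the positive functions $r$. Multiplying both numerator and denominator of the ratio above by this unit factor leaves the value unchanged but introduces the guide functions, exactly as the factor $\frac{q(x_n\,|\,x_0)}{q(x_n\,|\,x_{n-1})}\prod_{k=1}^{n-1}H_k=1$ did in Proposition~\ref{prop:importance}; note only that the product now runs the full length $k=1,\ldots,n$, covering the terminal transition into $x_n$ as well.

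The one genuinely new ingredient is the factor $p(X_0)$, which reflects that the initial state is random rather than fixed. I would handle this by taking the expectations on the right-hand side with respect to a base law under which $X_0$ is drawn from a chosen reference (proposal) measure and each $X_k$ thereafter from the prior transition $p(dx_k\,|\,x_{k-1})$; the prior density $p(X_0)$ then enters as the weight reconciling that initial measure with $p(dx_0)$, so that the Bayes ratio derived above is recovered. The main obstacle I anticipate is bookkeeping rather than anything analytic: I must confirm that, after inserting the unit factor and re-expressing the initial measure, the surviving weights in both numerator and denominator are precisely $p(X_0)\,p(y_n\,|\,X_n)\,\frac{r(y_n\,|\,X_0)}{r(y_n\,|\,X_n)}\prod_{k=1}^{n}J_k(X_{k-1:k})$, and that the spurious $r$ factors cancel against the telescoping product. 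No limiting or measure-theoretic subtlety beyond this is expected; the chief care is keeping the true likelihood $p(y_n\,|\,\cdot)$ and the arbitrary positive guide functions $r(y_n\,|\,\cdot)$ rigorously distinct throughout, since only the latter are free to be cancelled.
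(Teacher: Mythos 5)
Your proof follows essentially the same route as the paper's: express the posterior expectation as the ratio $\mathbb{E}\left[f(X_{0:n})\,p(y_{n}\,|\,X_{n})\right]/\mathbb{E}\left[p(y_{n}\,|\,X_{n})\right]$ via Bayes' rule together with $p(y_{n})=\mathbb{E}\left[p(y_{n}\,|\,X_{n})\right]$, and then insert the telescoping unit factor $\frac{r(y_{n}\,|\,x_{0})}{r(y_{n}\,|\,x_{n})}\prod_{k=1}^{n}J_{k}(x_{k-1:k})=1$ into both numerator and denominator. If anything, you are more careful than the paper on one point: the paper's proof lets the factor $p(X_{0})$ appear in the final display without comment, whereas you explicitly account for it as the density weight reconciling a reference initial measure with the prior $p(dx_{0})$, which is the reading that makes the stated formula exact.
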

\begin{proof}
Observe that 
\[
p(y_{n})=\mathbb{E}\left[p(y_{n}\,|\,X_{n})\right]
\]
so that
\begin{eqnarray*}
\mathbb{E}\left[f(X_{0:n})\,|\,y_{n}\right] & = & \mathbb{E}\left[f(X_{0:n})\frac{p(y_{n}\,|\,X_{n})}{p(y_{n})}\right]\\
 & = & \frac{\mathbb{E}\left[f(X_{0:n})p(y_{n}\,|\,X_{n})\right]}{p(y_{n})}\\
 & = & \frac{\mathbb{E}\left[f(X_{0:n})p(y_{n}\,|\,X_{n})\right]}{\mathbb{E}\left[p(y_{n}\,|\,X_{n})\right]}.
\end{eqnarray*}
Introduce
\[
\frac{r(y_{n}\,|\,x_{0})}{r(y_{n}\,|\,x_{n})}\prod_{k=1}^{n}J_{k}(x_{k-1:k})=1,
\]
We then have
\begin{eqnarray*}
\mathbb{E}\left[f(X_{0:n})\,|\,y_{n}\right] & = & {\displaystyle \frac{\mathbb{E}\left[f(X_{0:n})p(X_{0})p(y_{n}\,|\,X_{n})\frac{r(y_{n}\,|\,X_{0})}{r(y_{n}\,|\,X_{n})}\prod_{k=1}^{n}J_{k}(X_{k-1:k})\right]}{\mathbb{E}\left[p(X_{0})p(y_{n}\,|\,X_{n})\frac{r(y_{n}\,|\,X_{0})}{r(y_{n}\,|\,X_{n})}\prod_{k=1}^{n}J_{k}(X_{k-1:k})\right]}}
\end{eqnarray*}

\end{proof}
We expect this modification to lead to an SMC algorithm that is particularly
useful when observations are highly informative.

\section{Implementation\label{sec:implementation}}

The recursive structure of the weight functions in Propositions \ref{prop:exact}--\ref{prop:ssm}
and Markov property of the process $X(t)$ facilitate SMC algorithms
to compute the expectations of interest. Such algorithms propagate,
weight and resample a population of $N$ particles. We present Algorithms
\ref{alg:algorithm1}--\ref{alg:algorithm3} as pseudocode, corresponding
to Propositions \ref{prop:exact}--\ref{prop:ssm}, respectively.
Where a superscript $i$ appears on the left-hand side of an assignment
in these algorithms, the intended interpretation is ``for all $i\in\left\{ 1,\ldots,N\right\} $''.
The left arrow notation ($\leftarrow$) denotes assignment of the
value on the right to the variable on the left, while the tilde notation
($\sim$) denotes assignment of a draw from the distribution on the
right to the variable on the left. We start with Algorithm \ref{alg:algorithm1},
corresponding to Proposition \ref{prop:exact}:

\begin{algorithm}
{\normalfont
\begin{codebox}
\zi
\zi $x^i_0 \leftarrow x_0$ \Comment initialise
\zi $w^i_0 \leftarrow 1/N$
\zi \For $k = 1,\ldots,n-1$
\zi   \If resampling is triggered \Then
\zi     $a^i_k \sim R(da_k\,|\,w^{1:N}_{k-1})$ \Comment select
\zi     $w^i_k \leftarrow 1/N$
\zi   \Else
\zi     $a^i_k \leftarrow i$
\zi     $w^i_k \leftarrow w^i_{k-1}/\sum_j^N w^j_{k-1}$
      \End
\zi   $x^i_k \sim p(dx_k\,|\,x^{a^i_k}_{k-1})$ \Comment propagate
\zi   $w^i_k \leftarrow p(x_n\,|\,x^i_k) \cdot w^i_k/w^{a^i_k}_{k-1}$ \Comment weight
    \End
\end{codebox}
\label{alg:algorithm1}}

\end{algorithm}Algorithm \ref{alg:algorithm2}, corresponding to
Proposition \ref{prop:importance}, is given below. It assumes that
$q(x_{n}\,|\,x_{n-1}):=p(x_{n}\,|\,x_{n-1})$. Note that Algorithm
\ref{alg:algorithm1} is just the special case of Algorithm \ref{alg:algorithm2}
where $q(x_{n}\,|\,x_{k}):=p(x_{n}\,|\,x_{k})$.

\begin{algorithm}
{\normalfont

\begin{codebox}
\zi
\zi $x^i_0 \leftarrow x_0$ \Comment initialise
\zi $w^i_0 \leftarrow 1/N$
\zi \For $k = 1,\ldots,n-1$
\zi   \If resampling is triggered \Then
\zi     $a^i_k \sim R(da_k\,|\,w^{1:N}_{k-1})$ \Comment select
\zi     $w^i_k \leftarrow 1/N$
\zi   \Else
\zi     $a^i_k \leftarrow i$
\zi     $w^i_k \leftarrow w^i_{k-1}/\sum_j^N w^j_{k-1}$
      \End
\zi   $x^i_k \sim p(dx_k\,|\,x^{a^i_k}_{k-1})$ \Comment propagate
\zi   $w^i_k \leftarrow q(x_n\,|\,x^i_k) \cdot w^i_k/w^{a^i_k}_{k-1}$ \Comment weight
    \End
\end{codebox}
\label{alg:algorithm2}}

\end{algorithm}

At the conclusion of Algorithm \ref{alg:algorithm1} or \ref{alg:algorithm2},
let $b_{n-1}^{i}=i$ and, recursively, $b_{k}^{i}=a_{k+1}^{b_{k+1}^{i}}$.
The indices $b_{k}^{i}$ then establish ancestral lines
\[
x_{0}\rightarrow x_{1}^{b_{1}^{i}}\rightarrow x_{2}^{b_{2}^{i}}\rightarrow\cdots\rightarrow x_{n-2}^{b_{n-2}^{i}}\rightarrow x_{n-1}^{b_{n-1}^{i}}\rightarrow x_{n}.
\]
Because SMC methods are a particular case of mean field particle methods~\citep{DelMoral2004},
these may be used to compute expectations of the forms that appear
in Propositions \ref{prop:exact} and \ref{prop:importance}:
\begin{eqnarray*}
\frac{\sum_{i=1}^{N}w_{n-1}^{i}f\left(x_{1}^{b_{1}^{i}},\ldots,x_{n-1}^{b_{n-1}^{i}}\right)}{\sum_{i=1}^{N}w_{n-1}^{i}} & \underset{N\uparrow\infty}{\longrightarrow} & \mathbb{E}\left[f(X_{1:n-1})\,|\,x_{0},x_{n}\right].
\end{eqnarray*}
The denominator on the left is also an estimate of the normalising
constant:
\begin{eqnarray}
\sum_{i=1}^{N}w_{n-1}^{i} & \underset{N\uparrow\infty}{\longrightarrow} & \mathbb{E}\left[p(x_{n}\,|\,X_{n-1})\,\middle|\,x_{0}\right]\label{eq:normalising-constant1}\\
 & = & p(x_{n}\,|\,x_{0}).\nonumber 
\end{eqnarray}
Note that as Algorithms \ref{alg:algorithm1} and \ref{alg:algorithm2}
normalise the weights $w_{n-1}^{1:N}$ after the selection step but
before the weighting step, no factor of $1/N$ appears outside the
summation in (\ref{eq:normalising-constant1}).

It is often the case that the transition density $p(x_{n}\,|\,x_{n-1})$
does not have a convenient closed form for pointwise evaluation, so
that the last line of Algorithm \ref{alg:algorithm2} cannot be evaluated.
In such cases one of two approaches might be considered. In the first
approach, the sequence of times $t_{1},\ldots,t_{n-1}$ might be set
so that the last interval, $\Delta t_{n}=t_{n}-t_{n-1}$, is sufficiently
small for the Euler--Maruyama approximation of $p(x_{n}\,|\,x_{n-1})$
to be credible. Because this last transition is evaluated but not
simulated (much less simulated repeatedly with accumulating error),
the stability issues of the Euler--Maruyama discretisation will not
manifest. In the second approach, an observation model might be constructed
with an $\epsilon$-ball around $x_{n}$, where $\epsilon$ is commensurate
with the discretisation error already inherent in the numerical integrator.
We do this in the SIR example of §\ref{sec:results}.

This second approach yields an algorithm resembling ABC~\citep{Beaumont2002}.
In a simple ABC algorithm, one would simulate a path $x'_{1:n}\sim p(dx_{1:n}\,|\,x_{0})$
and accept it if $\rho(x'_{n},x_{n})\leq\epsilon$ for some distance
function $\rho$ and error threshold $\epsilon$. The SMC component
of the proposed method marginalises over multiple such paths. If we
define the unnormalised density
\[
p(x_{n}\,|\,x_{n-1}):=\begin{cases}
1, & \text{if}\,\rho(x'_{n},x_{n})\leq\epsilon\\
0, & \text{otherwise}
\end{cases},
\]
then the estimate of the normalising constant (\ref{eq:normalising-constant1})
is also an estimate of the acceptance probability of $\theta'$. It
is worth stressing that the SMC component in this case is used in
a very different way to ABC SMC methods in the spirit of e.g. \citet{Sisson2007,Beaumont2009,DelMoral2012a,Peters2012}.
In these works, SMC is used over parameters, here it is used over
the state. It could be coupled with MCMC (as in particle MCMC, \citealt{Andrieu2010})
or another level of SMC (as in SMC$^{2}$, \citealt{Chopin2013})
for parameter estimation, however.

Algorithm \ref{alg:algorithm3}, corresponding to Proposition \ref{prop:ssm},
is a slight variation on Algorithm \ref{alg:algorithm2}, as $x_{0}$
and $x_{n}$ are no longer fixed and $y_{n}$ is introduced. It assumes,
sensibly, that $r(y_{n}\,|\,x_{n}):=p(y_{n}\,|\,x_{n})$.

\begin{algorithm}
{\normalfont

\begin{codebox}
\zi
\zi $x^i_0 \sim p(dx_0)$ \Comment initialise
\zi $w^i_0 \leftarrow 1/N$
\zi \For $k = 1,\ldots,n$
\zi   \If resampling is triggered \Then
\zi     $a^i_k \sim R(da_k\,|\,w^{1:N}_{k-1})$ \Comment select
\zi     $w^i_k \leftarrow 1/N$
\zi   \Else
\zi     $a^i_k \leftarrow i$
\zi     $w^i_k \leftarrow w^i_{k-1}/\sum_j^N w^j_{k-1}$
      \End
\zi   $x^i_k \sim p(dx_k\,|\,x^{a^i_k}_{k-1})$ \Comment propagate
\zi   $w^i_k \leftarrow r(y_n\,|\,x^i_k) \cdot w^i_k/w^{a^i_k}_{k-1}$ \Comment weight
    \End
\end{codebox}
\label{alg:algorithm3}}

\end{algorithm}

At the conclusion of Algorithm \ref{alg:algorithm3}, the indices
$b_{k}^{i}$, defined as before, establish ancestral lines
\[
x_{0}^{b_{0}^{i}}\rightarrow x_{1}^{b_{1}^{i}}\rightarrow x_{2}^{b_{2}^{i}}\rightarrow\cdots\rightarrow x_{n-2}^{b_{n-2}^{i}}\rightarrow x_{n-1}^{b_{n-1}^{i}}\rightarrow x_{n}^{b_{n}^{i}}.
\]
These may be used to compute expectations of the form that appears
in Proposition \ref{prop:ssm}:
\begin{eqnarray*}
\frac{\sum_{i=1}^{N}w_{n}^{i}f\left(x_{0}^{b_{0}^{i}},\ldots,x_{n}^{b_{n}^{i}}\right)}{\sum_{i=1}^{N}w_{n}^{i}} & \underset{N\uparrow\infty}{\longrightarrow} & \mathbb{E}\left[f(X_{0:n})\,|\,y_{n}\right].
\end{eqnarray*}
The denominator on the left is also an estimate of the normalising
constant:
\begin{eqnarray}
\sum_{i=1}^{N}w_{n}^{i} & \underset{N\uparrow\infty}{\longrightarrow} & \mathbb{E}\left[p(y_{n}\,|\,X_{n})\right]\label{eq:normalising-constant2}\\
 & = & p(y_{n}).\nonumber 
\end{eqnarray}
Note that as Algorithm \ref{alg:algorithm3} normalises the weights
$w_{n}^{1:N}$ after the selection step but before the weighting step,
no factor of $1/N$ appears outside the summation in (\ref{eq:normalising-constant2}).

Algorithm \ref{alg:algorithm3} treats the case where there is only
a single observation, at time $t_{n}$. This is straightforwardly
extended to a time series of observations, where the algorithm is
repeated, but removing the first two lines from the second and subsequent
iterations; the current particles and their weights are maintained
instead. This is then a particle filter, with the addition of intermediate
times between observations where additional weighting and resampling
is performed to guide particles towards the next state.

\subsection{Intermediate weighting}

What remains is the selection of appropriate functions $q$ and $r$
in Algorithms \ref{alg:algorithm2} and \ref{alg:algorithm3}. For
good performance, we should seek $q(x_{n}\,|\,x_{k})\approx p(x_{n}\,|\,x_{k})$,
so that Algorithm \ref{alg:algorithm2} approximates Algorithm \ref{alg:algorithm1}
as closely as possible, and $r(y_{n}\,|\,x_{k})\approx p(y_{n}\,|\,x_{k})$,
much like the use of lookahead~\citep{Lin2013} strategies for stage
one weights in an auxiliary particle filter~\citep{Pitt1999}. As
$q(x_{n}\,|\,x_{k})$ and $r(y_{n}\,|\,x_{k})$ play a similar role
to the proposal distribution in importance sampling, we should also
prefer that their tails are not too tight with respect to $p(x_{n}\,|\,x_{k})$
or $p(y_{n}\,|\,x_{k})$, respectively. For reasons of computational
expediency, we suppose that the weight functions are to be selected
\textsl{a priori}. Choosing some parametric form, we might choose
either to fit the function to simulations of the prior model, or to
the data set. We utilise both approaches in the examples of §\ref{sec:results}.

The implementation in \citet{Lin2010} uses a kernel density estimate
of $p(x_{k}\,|\,x_{n})\propto p(x_{n}\,|\,x_{k})p(x_{k})$, obtained
by propagating pilot particles backwards from time $t_{n}$ to time
$t_{k}$, initialising each at $x_{n}$. This is problematic for the
constraints we have given ourselves: it requires that $x_{n}$ is
fully observed in order to initialise each particle, it does not support
indirect observation, and we do not wish to assume that the backwards
transition can be simulated in a numerically stable way.

For diffusion processes, discretisations that provide a closed-form
transition density may be useful. As even very approximate weight
functions may have some utility, an Euler--Maruyama discretisation
might prove useful. More sophisticated Gaussian approximations such
as a linear noise approximation~\citep[p. 258]{VanKampen2007} may
also be useful, if more computationally expensive.

We have found that a generally useful approach is to fit a Gaussian
process to each observed time series and then construct weight functions
based on these;
\[
X(t)\sim\mathcal{GP}\left(\mu(t),C(\Delta t)\right),
\]
with mean function $\mu(t)$ and covariance function $C(\Delta t)$.
This affords a great deal of flexibility in the design of weight functions,
accommodating arbitrary mean functions to capture nonlinear drifts,
and a variety of covariance functions to capture local behaviour and
smoothness. In the examples of §\ref{sec:results} it has not been
necessary to be too clever to obtain good results: the mean function
is always $\mu(t)=0$ and the covariance function of a squared exponential
form, parameterised by $\alpha$ and $\beta$;
\[
C(\Delta t)=\alpha\exp\left(-\frac{1}{2\beta}(\Delta t)^{2}\right).
\]
The parameters $\alpha$ and $\beta$ are set to their maximum likelihood
estimates, obtained offline. One can, of course, imagine more sophisticated
mean and covariance functions---Gaussian processes being very flexible
in this regard---but we have found this simple choice adequate for
the examples here. The functions $q(x_{n}\,|\,x_{k})$ are then constructed
by conditioning the Gaussian process on the current state, taking
\[
q(x_{n}\,|\,x_{k}):=\phi(\mu_{k},\sigma_{k}^{2})
\]
with
\begin{eqnarray*}
\mu_{k} & = & \frac{C(t_{n}-t_{k})}{\alpha}x_{k}\\
\sigma_{k}^{2} & = & \alpha-\frac{C(t_{n}-t_{k})^{2}}{\alpha}.
\end{eqnarray*}
Conditioning on the current state only, and not the full state history,
is a computational concession, preserving linear complexity in the
number of particles $N$. We have found this sufficient for the examples
in this work. If necessary, one might consider conditioning on some
fixed number of previous states, preserving the same linear complexity.
In the case of indirect observation, a (possibly approximate) Gaussian
observation model of 
\[
Y(t)\sim\mathcal{N}(X(t),\varsigma^{2}(t)),
\]
for some variance $\varsigma^{2}(t)$, would suggest weight functions
of
\[
r(y_{n}\,|\,x_{k}):=\phi(\mu_{k},\sigma_{k}^{2}+\varsigma_{n}^{2}).
\]

We may be concerned that these tight-tailed Gaussians are too narrow
as weight functions, and may be too aggressive in particle selection.
A simple precaution is to inflate the variance by some constant factor,
and we do this in the examples of §\ref{sec:results}. One might also
consider heavier-tailed functions such as the Student $t$. Another
limitation of the Gaussian process formulation is that it may be inadequate
for capturing multimodal transition densities. This occurs in the
periodic drift example of Section \ref{sec:results}, and we propose
a bespoke function in that case.

\subsection{Intermediate resampling}

The intermediate resampling steps require some consideration. In the
first instance, they introduce additional computational expense. In
the second they may introduce additional variance in normalising constant
estimates~\citep{Pitt2002,Lee2008}. In the case of computational
cost, the numerical integration of diffusion processes (required to
propagate particles forward) will typically be much more expensive
than the steps required to resample. We might assume, then, that the
additional resampling adds little to overall cost. At any rate an
adaptive resampling scheme mitigates both issues. 

A simple adaptive resampling scheme is based on the \emph{effective
sample size} (ESS), which for the weight vector $w_{k}^{1:N}$ at
time $t_{k}$ is~\citep{Liu1995}
\[
\textrm{ESS}(w_{k}^{1:N})=\frac{\left(\sum_{i}^{N}w_{k}^{i}\right)^{2}}{\sum_{i}^{N}(w_{k}^{i})^{2}}.
\]
Resampling is then only triggered if this quantity falls below some
threshold. We do this in the experimental results of §\ref{sec:results},
and find that the net effect of the additional resampling steps is
beneficial.

When an adaptive scheme such as this is used, the increase in variance
should be constant with respect to the number of intermediate times.
Intuitively, this is clear from (\ref{eq:weights}): the accumulated
weight at some time $t_{k}$ is always $p(x_{n}\,|\,x_{k})/p(x_{n}\,|\,x_{0})$,
regardless of the preceding time schedule $t_{1},\ldots,t_{k-1}$.
Rather than determining the accumulated weight, the time schedule
determines the times at which resampling should be considered. Note
that, if resampling is never triggered, all the additional weights
cancel. Algorithm \ref{alg:algorithm2} then becomes the method of
\citet{Pedersen1995}, while Algorithm \ref{alg:algorithm3}, iterated,
becomes the bootstrap particle filter.

\section{Experiments\label{sec:results}}

We use five different examples to demonstrate the methods:
\begin{description}
\item [{OU}] a linear--Gaussian Ornstein--Uhlenbeck process fit to simulated
data, without parameter estimation~\citep[c.f.][]{Sun2013},
\item [{FFR}] a linear--Gaussian Ornstein--Uhlenbeck process fit to Federal
Funds Rate data, with parameter estimation~\citep[c.f.][]{Ait-Sahalia1999},
\item [{PD}] a nonlinear periodic drift process fit to simulated data,
without parameter estimation~\citep[c.f.][]{Beskos2006,Lin2010},
\item [{SIR}] a multivariate and nonlinear susceptible/infected/recovered
compartmental model used in epidemiology, fit to influenza data from
a boarding school~\citep{Anonymous1978}, with parameter estimation~\citep[c.f.][]{Ross2009},
and
\item [{NPZD}] a multivariate and nonlinear nutrient/phytoplankton/zooplankton/detritus
model~\citep{Parslow2013} used in marine biogeochemistry, fit to
Ocean Station P data~\citep{Matear1995}, with parameter estimation~\citep[c.f.][]{Parslow2013,Murray2013a}.
\end{description}
The OU and PD examples are toy studies used to illustrate the methods,
while the FFR, SIR and NPZD examples are applied problems using real
data sets. The SIR example has additional interest for the ABC-like
approach taken.

Experiments are conducted using the LibBi software~\citep{Murray2013b},
in which the methods have been implemented under the name \emph{bridge
particle filter}. We use this name henceforth. Each example is available
as a separate LibBi package, available from \url{www.libbi.org}.
The bootstrap particle filter, as implemented in LibBi, is used for
comparison, noting that with no resampling at intermediate times,
it reduces to the method of \citet{Pedersen1995}.

Configurations for all experiments are summarised in Table \ref{tab:config}
and detailed in the text. Using LibBi, it is straightforward to run
both the bootstrap and bridge particle filters across multiple threads
on a central processing unit (CPU), with or without the use of SSE
vector instructions, or on a graphics processing unit (GPU). Table
\ref{tab:config} also documents the chosen hardware configuration
for each example, chosen for fastest execution time after some pilot
runs.

To compare methods, we use a number of metrics based on normalising
constant estimates, which are further scaled by execution time for
a fair computational comparison. For a set of $Z$ (see Table \ref{tab:config}
for specifics) normalising constant estimates, $z^{1:Z}$, obtained
after corresponding execution times $t^{1:Z}$, we define the metrics:
\begin{align}
\text{MSE}(\log z^{1:Z})^{-1} & \cdot\text{Mean}(t^{1:Z})^{-1}\label{eqn:metric-mse}\\
\text{ESS}(z^{1:Z}) & \cdot\text{Mean}(t^{1:Z})^{-1}\label{eqn:metric-ess}\\
\text{CAR}(z^{1:Z}) & \cdot\text{Mean}(t^{1:Z})^{-1}.\label{eqn:metric-car}
\end{align}
Here, $\text{Mean(}t^{1:Z})$ is the sample mean of $t^{1:Z}$. $\text{MSE}(z^{1:Z})$
is the mean-squared error of $z^{1:Z}$:
\[
\text{MSE}(z^{1:Z})=\frac{1}{Z}\sum_{i=1}^{Z}(z^{i}-z^{*})^{2},
\]
where $z^{*}$ is the true normalising constant. If $z^{*}$ is not
known, the best available estimate is substituted as its ``true''
value. This will be the estimate obtained from a bootstrap particle
filter\footnote{Our results indicate that the bridge particle filter should typically
give a better estimate, but we avoid basing the truth on the method
to be validated.} using a great many particles (see Table \ref{tab:config} for specifics).
$\text{ESS}(z^{1:Z})$ is the effective sample size of $z^{1:Z}$~\citep{Liu1995}:
\[
\text{ESS}(z^{1:Z})=\frac{\left(\sum_{i=1}^{Z}z^{i}\right)^{2}}{\sum_{i=1}^{Z}(z^{i})^{2}}.
\]
$\text{CAR}(z^{1:Z})$ is the \emph{conditional acceptance rate} of
$z^{1:Z}$~\citep{Murray2013a}:
\[
\text{CAR}(z^{1:Z})=\frac{1}{Z}\left(2\sum_{i=1}^{Z}c^{i}-1\right),
\]
where $c^{i}$ is the sum of the $i$th smallest elements of $z^{1:Z}$.
Higher values are favoured for all three of the metrics (\ref{eqn:metric-mse}--\ref{eqn:metric-car}).

The appropriate metric for comparison between methods depends on the
motivation for estimating the normalising constant. If the estimate
itself is of interest, such as to compute evidence for model comparison,
then the MSE-based metric (\ref{eqn:metric-mse}) is appropriate.
If the estimate is to be used as a weight in some importance sampling
scheme, then the ESS-based metric (\ref{eqn:metric-ess}) is most
appropriate, as the ESS approximates the equivalent number of unweighted
samples. If the estimate is instead to be used in some pseudo-marginal
MCMC scheme, such as particle marginal Metropolis--Hastings (PMMH,~\citealt{Andrieu2010}),
then the CAR-based metric (\ref{eqn:metric-car}) is most appropriate.
CAR is an estimate of the long-term acceptance rate of a Metropolis
chain that makes uniform proposals from $Z$ states with posterior
density proportional to the elements of $z^{1:Z}$~\citep{Murray2013a}.
When exact likelihoods are computed, all elements of $z^{1:Z}$ are
the same, and the CAR is one; in all other cases its difference from
one represents the loss of using an estimated likelihood. Both ESS
and CAR are sensitive to the high tail of $z^{1:Z}$, and reduce substantially
in the presence of high outliers. They capture the dramatic loss in
efficiency of importance and MCMC samplers in such circumstances.
This is a particular risk when choosing weight functions for the bridge
particle filter that are too tight. The MSE does not capture the implications
of such outliers.

The OU, FFR, PD and NPZD examples use a standard set of experiments
for comparing the bootstrap and bridge particle filters. The SIR example
does not, as the bootstrap particle filter could not be configured
to work reliably on it for similar tests. For the toy OU and PD examples,
parameters are fixed and we simulate 16 data sets. For the FFR and
NPZD examples, the data is fixed (a real-world data set), and we simulate
16 parameter sets from the prior for testing. We configure both the
bootstrap and bridge particle filters with the number of particles
set to each of $N\in\{2^{5},2^{6},\ldots,2^{10}\}$. Using all combinations
of the 16 data or parameter sets and six different settings for $N$
gives 96 experiments in total. The three metrics are computed for
each experiment, for a total of 288 comparisons on each example. The
configurations for these tests are given in Table \ref{tab:config}.

The FFR, SIR and NPZD examples use real data sets. We perform parameter
estimation in these scenarios using PMMH. We initialise two Markov
chains, one using a bootstrap particle filter, and one using the bridge
particle filter. Both are initialised to the same initial state, which
has been obtained from a pilot run sufficiently long to have converged
to the posterior distribution. Both use the same proposal distribution,
tuned by hand from the same pilot run. The chains are compared using
acceptance rate and effective sample size. The acceptance rate is
a suitable comparison because the proposal distribution is the same
for both chains, and a higher acceptance rate indicates less variability
in normalising constant estimates. The effective sample size used
is that given in \citet[p. 99]{Kass1998}. This is different to that
defined above, as it is intended for assessing the output of MCMC
rather than that of importance sampling. It is defined as: 
\[
\mathrm{ESS_{MCMC}}(\theta^{1:N_{\theta}})=\frac{N_{\theta}}{1+2\sum_{k=1}^{\infty}R_{\theta}(k)},
\]
where $N_{\theta}$ is the length of the chain and $R_{\theta}(k)$
its lag-$k$ autocorrelation. In practice, $R_{\theta}(k)$ must be
estimated from the chain itself, and the infinite sum truncated at
some finite $k$, after which $R_{\theta}(k)$ is assumed to be zero.
When there is more than one parameter, $\mathrm{ESS_{MCMC}}(\theta^{1:N_{\theta}})$
is computed for each separately, and the smallest value reported.

\begin{table}[tp]
{\footnotesize\begin{centering}
\begin{tabular}{lrrrrr}
\textbf{Model} & \textbf{OU} & \textbf{FFR} & \textbf{PD} & \textbf{SIR} & \textbf{NPZD}\tabularnewline
\hline 
Simulation time step & 0.01 & 0.01 & 0.075 & Adaptive & Adaptive\tabularnewline
Bridge time step & 0.1 & 0.1 & 1 & 0.01 & 1\tabularnewline
Bridge type & Exact & Exact & Parametric & $\mathcal{GP}$ & $\mathcal{GP}$\tabularnewline
 &  &  &  &  & \tabularnewline
\textbf{Data set} &  &  &  &  & \tabularnewline
\textnumero~observations & 100 & 300 & 100 & 14 & 227\tabularnewline
Observation time step & 1 & 1 & 30 & 1 & Irregular\tabularnewline
 &  &  &  &  & \tabularnewline
\textbf{Normalising constant experiments} &  &  &  &  & \tabularnewline
\textnumero~data sets & 16 & 1 & 16 & - & 1\tabularnewline
\textnumero~parameter sets & 1 & 16 & 1 & - & 16\tabularnewline
\textnumero~particles ($N$) & $2^{5},\ldots,2^{10}$ & $2^{5},\ldots,2^{10}$ & $2^{5},\ldots,2^{10}$ & - & $2^{5},\ldots,2^{10}$\tabularnewline
Total \textnumero~experiments & 96 & 96 & 96 & - & 96\tabularnewline
\textnumero~repetitions on each experiment ($Z$) & $2^{12}$ & $2^{12}$ & $2^{12}$ & - & $2^{10}$\tabularnewline
\textnumero~particles ($N$) for ``true'' log-likelihood & - & - & $2^{20}$ & - & $2^{20}$\tabularnewline
 &  &  &  &  & \tabularnewline
\textbf{Parameter estimation experiments} &  &  &  &  & \tabularnewline
\textnumero~MCMC steps after burn-in ($N_{\theta}$) & - & $1\times10^{5}$ & - & $5\times10^{4}$ & $1\times10^{5}$\tabularnewline
\textnumero~particles ($N$) & - & $2^{8}$ & - & $2^{14}$ & $2^{6}$\tabularnewline
Maximum lag for $\mathrm{ESS}_{\textrm{MCMC}}$ & - & 250 & - & 2000 & 5000\tabularnewline
Bootstrap $\mathrm{ESS}_{\textrm{MCMC}}$ & - & $^{*}$ & - & 72.7 & 47.3\tabularnewline
Bridge $\mathrm{ESS}_{\textrm{MCMC}}$ & - & 700.0 & - & 304.7 & 57.0\tabularnewline
Bootstrap acceptance rate (\%) & - & $^{*}$ & - & 3.9 & 14.8\tabularnewline
Bridge acceptance rate (\%) & - & 21.4 & - & 15.1 & 15.1\tabularnewline
 &  &  &  &  & \tabularnewline
\textbf{Configuration} &  &  &  &  & \tabularnewline
\textnumero~CPU threads & 1 & 4 & 1 & 2 & 4\tabularnewline
SSE instructions used? & No & No & No & No & Yes\tabularnewline
GPU used? & No & No & No & Yes & No\tabularnewline
Floating point precision & Double & Double & Double & Single & Double\tabularnewline
Relative ESS threshold for resampling & 0.5 & 0.5 & 0.5 & 0.5 & 0.5\tabularnewline
\hline 
\end{tabular}
\par\end{centering}}

\vspace{3mm}

$^{*}$ {\small{}The bootstrap particle filter consistently degenerates
on the FFR example, and no results could be obtained.}\protect\caption{Experimental configurations and some results for all examples.\label{tab:config}}
\end{table}

\subsection{Ornstein--Uhlenbeck (OU) process}

Consider the Ornstein--Uhlenbeck process satisfing the following Itô
SDE:
\begin{equation}
dX(t)=\left(\theta_{1}-\theta_{2}X(t)\right)\,dt+\theta_{3}\,dW(t),\label{eq:ou-process}
\end{equation}
with parameters $\theta_{1}=0.0187$, $\theta_{2}=0.2610$ and $\theta_{3}=0.0224$,
as obtained in \citet{Ait-Sahalia1999} and used in \citet{Sun2013}.
For step size $\Delta t$, the transition density is~\citep{Sun2013}
\[
p\left(x(t+\Delta t)\,|\,x(t)\right)=\phi\left(\mu(\Delta t),\sigma^{2}(\Delta t)\right),
\]
with
\begin{eqnarray*}
\mu(\Delta t) & = & \frac{\theta_{1}}{\theta_{2}}+\left(x(t)-\frac{\theta_{1}}{\theta_{2}}\right)\exp\left(-\theta_{2}\Delta t\right)\\
\sigma^{2}(\Delta t) & = & \frac{\theta_{3}^{2}}{2\theta_{2}}\left(1-\exp\left(-2\theta_{2}\Delta t\right)\right).
\end{eqnarray*}
Because the transition density is known explicitly for all $\Delta t$,
no approximation of it is required, and Algorithm \ref{alg:algorithm1}
can be applied.

We first consider sampling between the initial value $x(0)=0$ and
final value $x(1)=0.15$, applying both the bootstrap and the bridge
particle filters. The filters are configured as in Table \ref{tab:config}.
The results are shown in Figure \ref{fig:LinearDriftBridge-filter}.
Clearly the bridge particle filter produces a more satisfying result,
with the additional weighting and resampling steps guiding particles
towards the final value. Note that the performance of the bootstrap
particle filter can be made arbitrarily poor on this example by reducing
the discretisation time step, pushing the value of $x(1)=0.15$ further
into the tails of the last transition density. This sensitivity of
the bootstrap particle filter to the discretisation time step would
seem undesirable. If the bridge particle filter is allowed to resample
after each step, it is not so sensitive.

Next, we compare the normalising constant estimates of the bootstrap
and bridge particle filters using the three metrics introduced above.
We generate 16 data sets, each constructed by simulating the model
forward from $x(0)=0$ for 100 time units and taking the state at
times $1,2,\ldots,100$. The number of particles is set variously
to $N\in\{2^{5},2^{6},\ldots,2^{10}\}$. Each unique pair of a data
set and an $N$ constitutes an experiment, for 96 experiments in total.
The bootstrap and bridge particle filters are applied to each experiment
4096 times, each time producing an estimate of the normalising constant.
From these estimates, each of the three metrics is computed. For computing
the MSE-based metric, the true normalising constant is used, this
being readily computed as the model is linear and Gaussian. Results
are in Figure \ref{fig:LinearDriftBridge-metrics}. From this we see
that the bridge particle filter outperforms the bootstrap particle
filter in the great majority of comparisons, very often substantially
so.

\begin{figure}[p]
\begin{centering}
\includegraphics[width=1\textwidth]{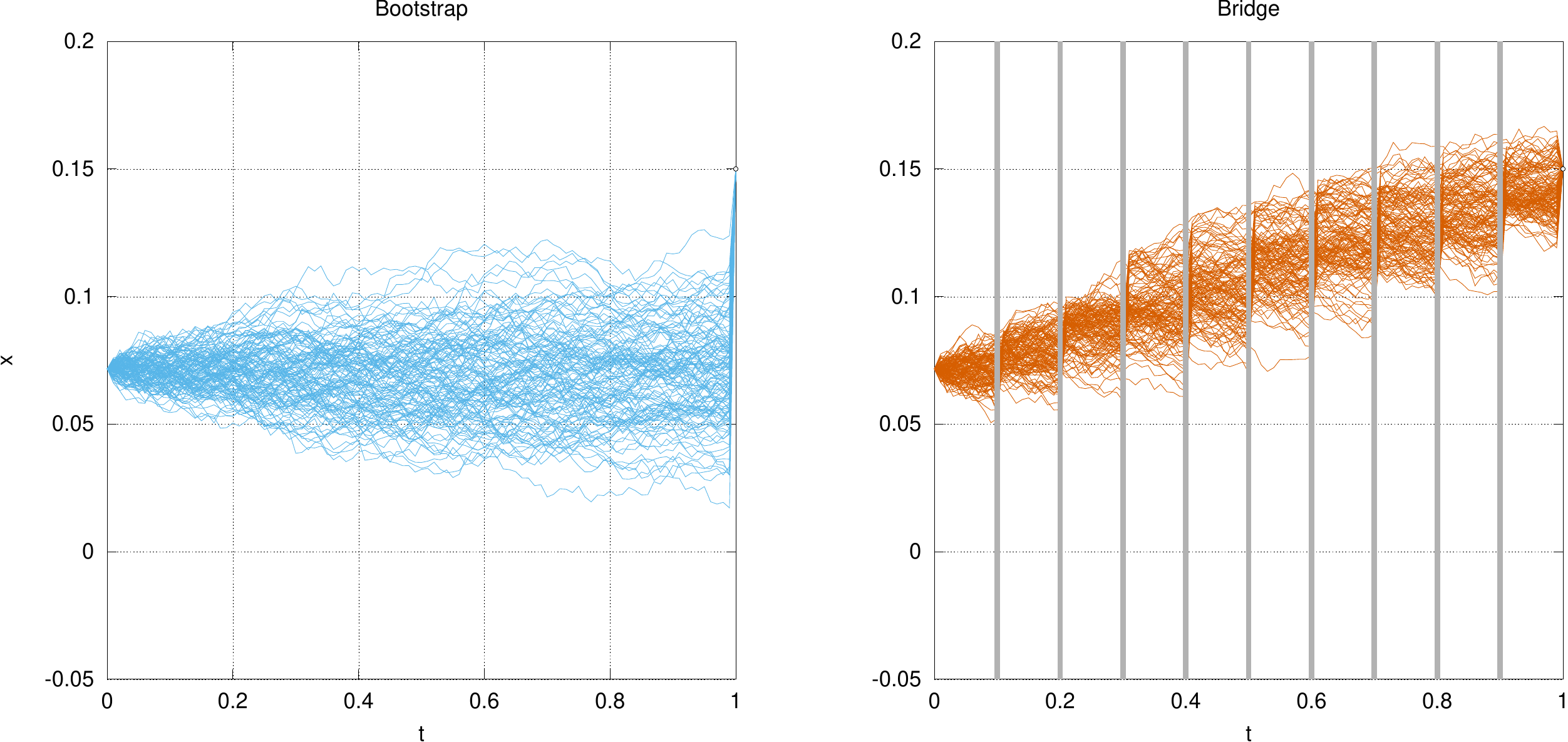}
\par\end{centering}

\protect\caption{Comparison of particles generated by \textbf{(left)} the bootstrap
particle filter, and \textbf{(right)} the bridge particle filter for
the OU example. The point at $(1,0.15)$ in each plot indicates the
observation. Solid vertical lines indicate times at which resampling
is triggered. By introducing weighting and resampling steps at intermediate
times, the bridge particle filter guides particles toward the observation.\label{fig:LinearDriftBridge-filter}}
\end{figure}

\begin{figure}[p]
\begin{centering}
\includegraphics[width=1\textwidth]{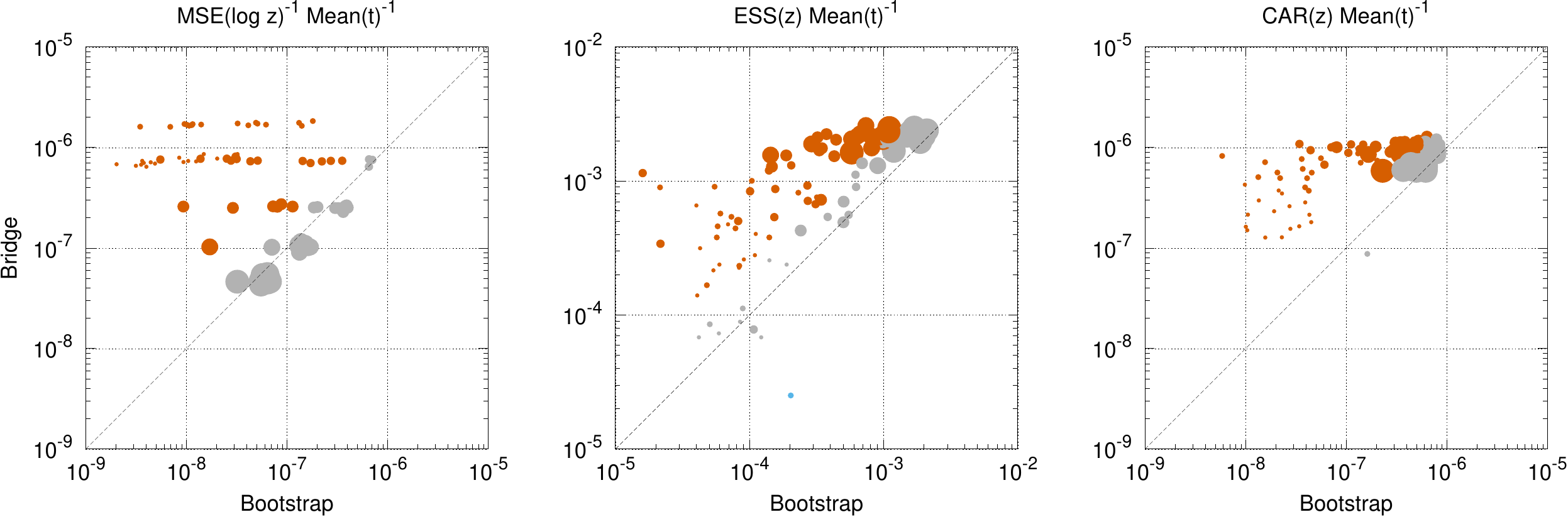}
\par\end{centering}

\protect\caption{Metrics for the OU example, comparing the bootstrap and bridge particle
filters. Ninety-six experiments are conducted, each a unique combination
of one of 16 simulated data sets and a number of particles $N\in\{2^{5},2^{6},\ldots,2^{10}\}$.
The bootstrap particle filter is used to estimate the normalising
constant for each experiment 4096 times, and each of the three metrics
computed from these. The same is done for the bridge particle filter.
Each figure then compares the bootstrap particle filter ($x$-axis)
to the bridge particle filter ($y$-axis) for one of the metrics.
Each point represents one of the experiments, the area of the point
proportional to the number of particles in that experiment. Higher
is better for all metrics, so that points above the diagonal are scenarios
where the bridge particle filter rates superior, while points below
are where the bootstrap particle filter rates superior. Red points
are those where the bridge particle filter outperforms the bootstrap
by a factor of at least two, blue points where the bootstrap particle
filter outperforms the bridge by a factor of at least two. Remaining
points are shown in grey.\label{fig:LinearDriftBridge-metrics}}
\end{figure}

\subsection{Federal Funds Rate (FFR)}

We apply the same process model (\ref{eq:ou-process}) to 25 years
of United States Federal Funds Rate data\footnote{Obtained from http://www.federalreserve.gov/releases/h15/data.htm},
monthly from January 1989 to December 2013, with an interest in parameter
estimation. A similar study is conducted in \citet{Ait-Sahalia1999}.
Algorithm \ref{alg:algorithm1} can be used again. We put prior distributions
on parameters
\begin{eqnarray*}
\theta_{1} & \sim & \mathcal{U}(-1,1)\\
\theta_{2} & \sim & \mathcal{U}(0,1)\\
\theta_{3} & \sim & \mathcal{U}(0,1),
\end{eqnarray*}
where $\mathcal{U}(a,b)$ denotes a uniform distribution on the interval
$[a,b]$.

The first comparison is of the normalising constant estimates of the
bootstrap and bridge particle filters using the three metrics introduced
above. We simulate 16 parameter sets from the prior distribution.
The number of particles is set variously to $N\in\{2^{5},2^{6},\ldots,2^{10}\}$.
Each unique pair of a parameter set and an $N$ constitutes an experiment,
for 96 experiments in total. The bootstrap and bridge particle filters
are applied to each experiment 4096 times, each time producing an
estimate of the normalising constant. Each of the three metrics is
computed from these estimates, for 288 comparisons in total. For computing
the MSE-based metric, the true normalising constant is used, this
being readily computed as the model is linear and Gaussian. Results
are in Figure \ref{fig:FederalFundsRate-metrics}.

The second comparison is to perform parameter estimation using a PMMH
sampler. Two Markov chains are initialised from the same initial state,
obtained from a pilot run that is considered to have converged to
the posterior distribution. The same proposal distribution is used
for both chains. Each chain is configured as in Table \ref{tab:config}.
The posterior distribution for the chain using the bridge particle
filter is given in Figure \ref{fig:FederalFundsRate-posterior}, and
its acceptance rate and effective sample size in Table \ref{tab:config}.
We have been unable to configure the bootstrap particle filter to
work in this example, however. This is explained by the posterior
of $\theta_{3}$ being concentrated on very small values around 0.0017
to 0.0021 (see Figure \ref{fig:FederalFundsRate-posterior}). This
results in a process with very narrow diffusivity, for which, it would
seem, at least some observations become outliers with respect to the
prior. These cannot be tracked by the bootstrap particle filter with
a computationally feasible number of particles. The bridge particle
filter can, even at this low $\theta_{3}$, because of the additional
weighting and resampling steps.

\begin{figure}[p]
\begin{centering}
\includegraphics[width=1\textwidth]{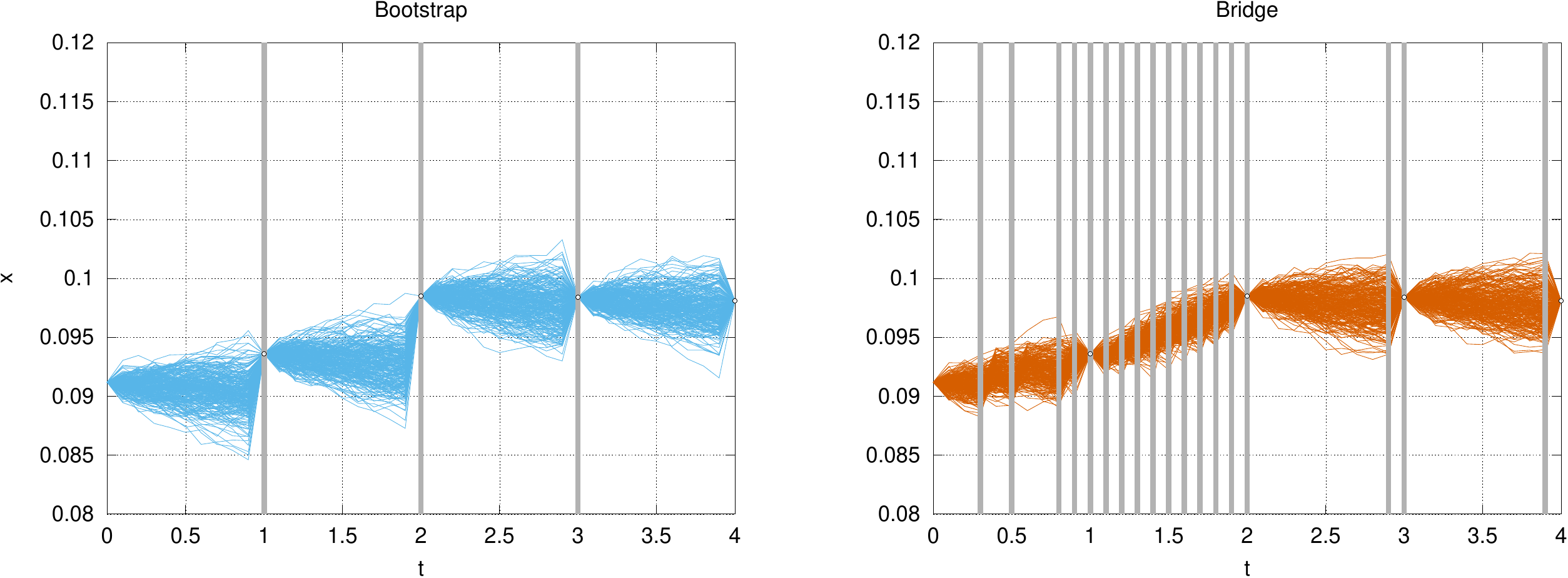}
\par\end{centering}

\protect\caption{Comparison of particles generated by \textbf{(left)} the bootstrap
particle filter, and \textbf{(right)} the bridge particle filter for
the FFR example, over a short interval of time, with a particular
parameter setting. Solid vertical lines indicate times at which resampling
is triggered.\label{fig:FederalFundsRate-filter}}
\end{figure}
\begin{figure}[p]
\begin{centering}
\includegraphics[width=1\textwidth]{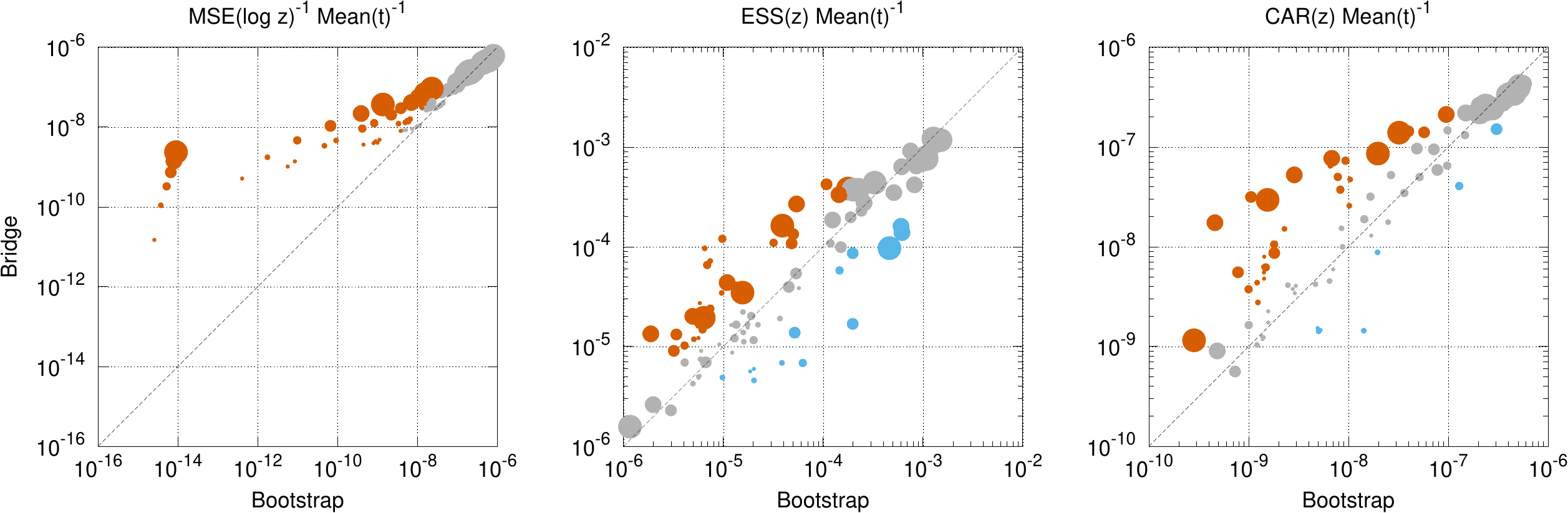}
\par\end{centering}

\protect\caption{Metrics for the FFR example, comparing the bootstrap and bridge particle
filters. See Figure \ref{fig:LinearDriftBridge-metrics} caption for
explanation.\label{fig:FederalFundsRate-metrics}}
\end{figure}

\begin{figure}
\begin{centering}
\includegraphics[width=1\textwidth]{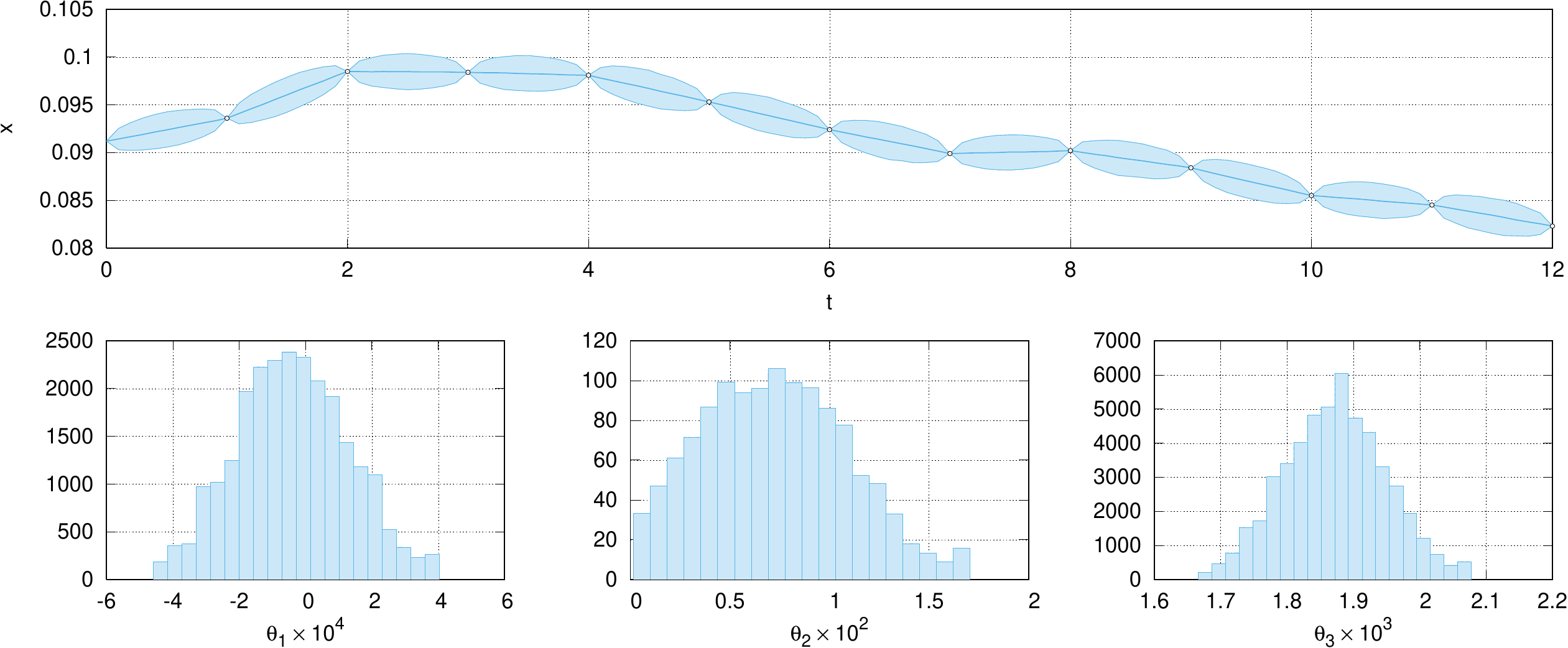}
\par\end{centering}

\protect\caption{Posterior distributions over the state variable (first year only)
and parameters for the FFR example, obtained with PMMH using the bridge
particle filter. For the state variable, the shaded region indicates
the 95\% credibility interval at each time, and the middle line the
median. Points mark observations. For parameters, a normalised histogram
is given.\label{fig:FederalFundsRate-posterior}}
\end{figure}

\subsection{Periodic drift (PD) process}

Consider the diffusion process satisfying the following Itô SDE with
nonlinear drift function, introduced in \citet{Beskos2006} and studied
further in \citet{Lin2010}:
\begin{equation}
dX(t)=\sin\left(X(t)-\theta\right)\,dt+dW(t).\label{eq:pd-process}
\end{equation}
We fix $\theta=\pi$ and $x(0)=0$, and discretise using an Euler--Maruyama
discretisation at a time step of 0.075.

Algorithm \ref{alg:algorithm2} can be used. A Gaussian process does
not capture the dynamics of this process well, as it is multi-modal.
We instead propose a parametric weight function
\begin{equation}
q(x_{n}\,|\,x_{k}):=\frac{1}{z}\left(\cos(x_{n}-\hat{x}_{k})+1+\epsilon\right)\exp\left(-\frac{(x_{n}-\hat{x}_{k})^{2}}{2\sigma^{2}(t_{n}-t_{k})}\right),\label{eq:periodicdrift-weight}
\end{equation}
where $\hat{x}_{k}$ is $x_{k}$ rounded to the nearest multiple of
$2\pi$, $\epsilon$ is a small positive value meant to prevent the
density from being zero at the cosine troughs, and the normalising
constant $z$ is
\[
z=\sqrt{2\pi\sigma^{2}(t_{n}-t_{k})}\left(\exp\left(-\frac{1}{2}\sigma^{2}(t_{n}-t_{k})\right)+1+\epsilon\right).
\]
To obtain the values of the parameters $\epsilon$ and $\sigma^{2}$,
we perform a maximum likelihood estimation using the Nelder--Mead
method and a data set obtained by simulating 10000 paths from (\ref{eq:pd-process})
for 30 units of time. It is important that this function is not too
tight, or we risk high normalising constant estimates that will dramatically
reduce ESS and CAR. For this reason we take the final function to
the power $0.25$. The result is given in Figure \ref{fig:periodicdrift-weight}.

As an initial test we simulate diffusion bridges conditioned on the
data set in \citet{Lin2010}, that is, $x(30)=1.49$, $x(60)=-5.91$,
$x(90)=-1.17$. Both the bootstrap and bridge particle filters are
then applied. The number of particles is set to $N=128$, with the
bridge particle filter applying intermediate weighting and resampling
at time steps of 1. The results are given in Figure \ref{fig:periodicdrift-paths}.
This shows that additional resampling is indeed triggered in the bridge
particle filter on approach to each observation.

We next generate 16 data sets, each constructed by simulating the
model forward for 3000 time units and taking the state at times $0,30,\ldots,3000$.
The number of particles is set variously to $N\in\{2^{5},2^{6},\ldots,2^{10}\}$.
Each unique pair of a data set and $N$ constitutes an experiment,
for 96 experiments in total. The bootstrap and bridge particle filters
are applied to each experiment 4096 times to produce normalising constant
estimates, and each of the three metrics computed from these. For
computing the MSE-based metric, a bootstrap particle filter using
$N=2^{20}$ is used to compute the ``exact'' log-likelihood for
each parameter set. Results are in Figure \ref{fig:periodicdrift-metrics}.

\begin{figure}[p]
\begin{centering}
\includegraphics[width=1\textwidth]{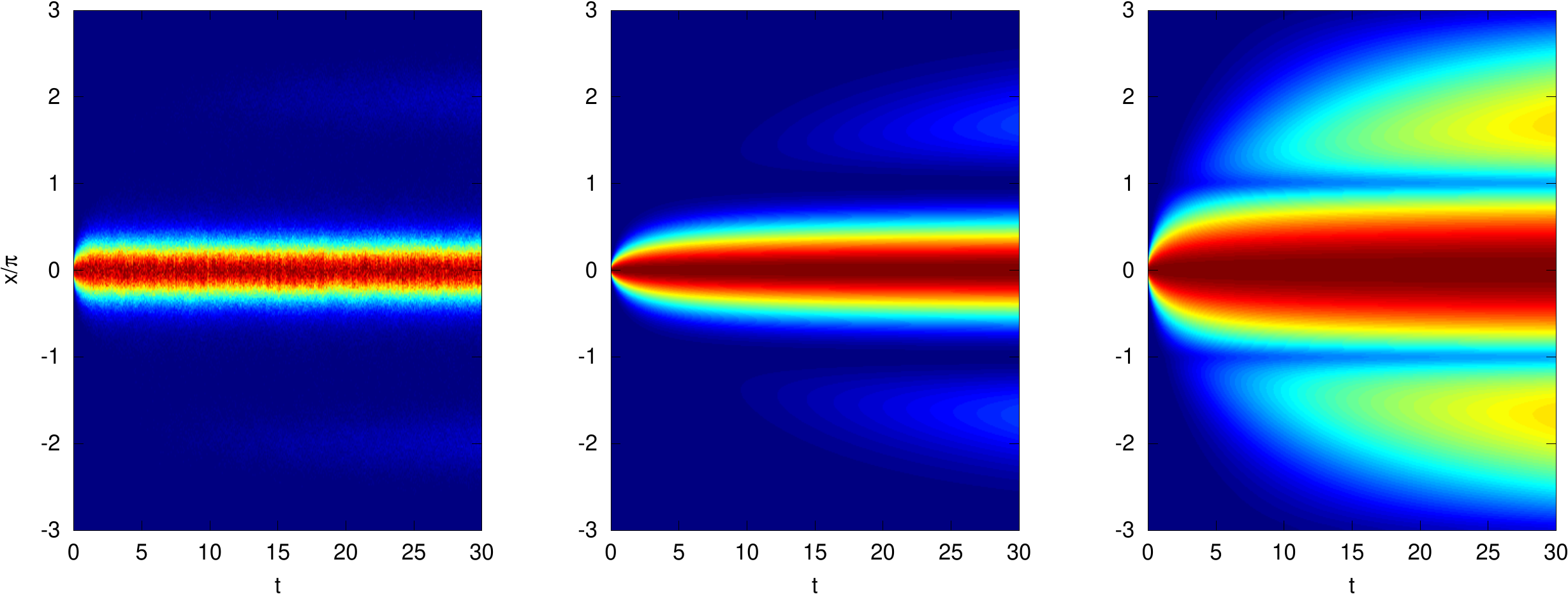}
\par\end{centering}

\protect\caption{Weight function used for the PD example, \textbf{(left)} histogram
of simulated paths used to fit the weight function, \textbf{(centre)}
the weight function (\ref{eq:periodicdrift-weight}) with maximum
likelihood parameter estimates $\epsilon=0.0259$ and $\sigma^{2}=0.3238$,
and \textbf{(right)} the same weight function taken to the power 0.25
(used in the experiments). In all three cases the colour scheme is
scaled between zero and the greatest value in each vertical section.\label{fig:periodicdrift-weight}}
\end{figure}

\begin{figure}[p]
\begin{centering}
\includegraphics[width=1\textwidth]{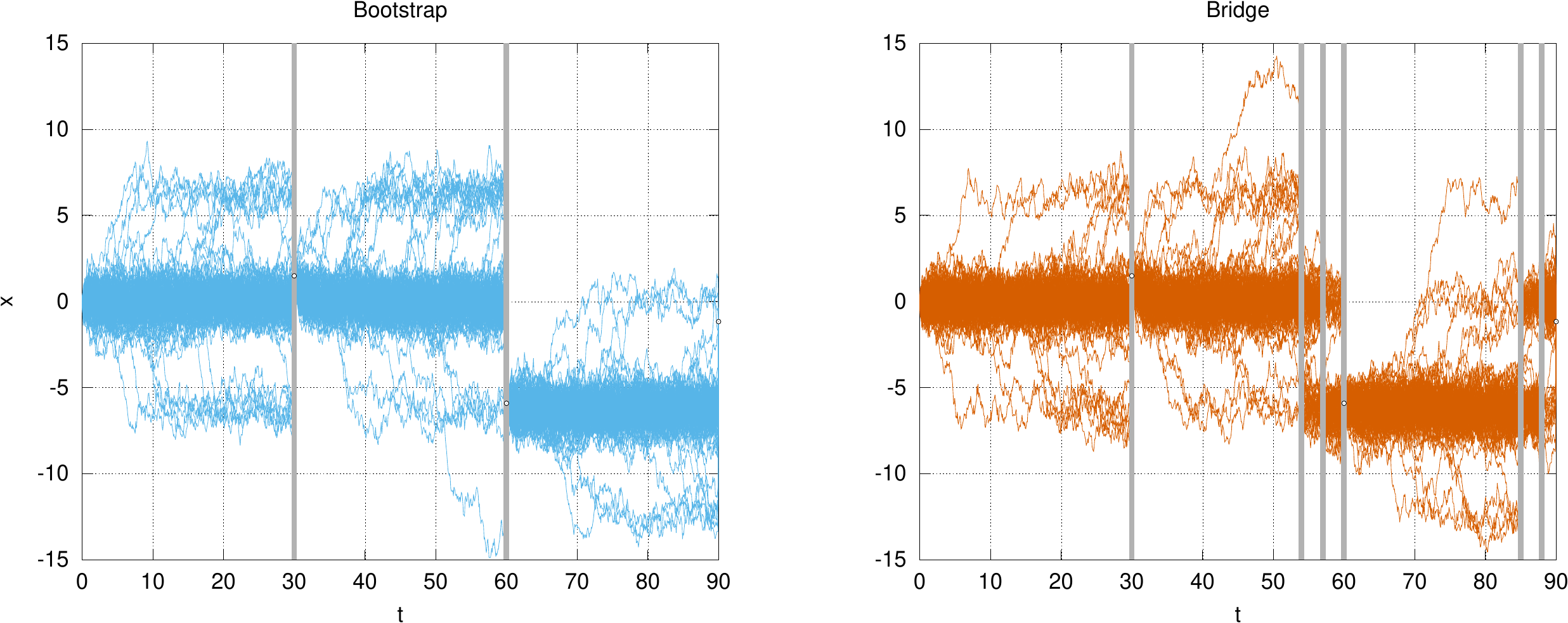}
\par\end{centering}

\protect\caption{Diffusion bridge samples for the PD example with the data set of \citet{Lin2010},
$x(0)=0,\,x(30)=1.49,\,x(60)=-5.91,\,x(90)=-1.17$, using a \textbf{(left)}
bootstrap particle filter, and \textbf{(right)} bridge particle filter.
Solid vertical lines indicate times at which resampling is triggered.\label{fig:periodicdrift-paths}}
\end{figure}

\begin{figure}[p]
\begin{centering}
\includegraphics[width=1\textwidth]{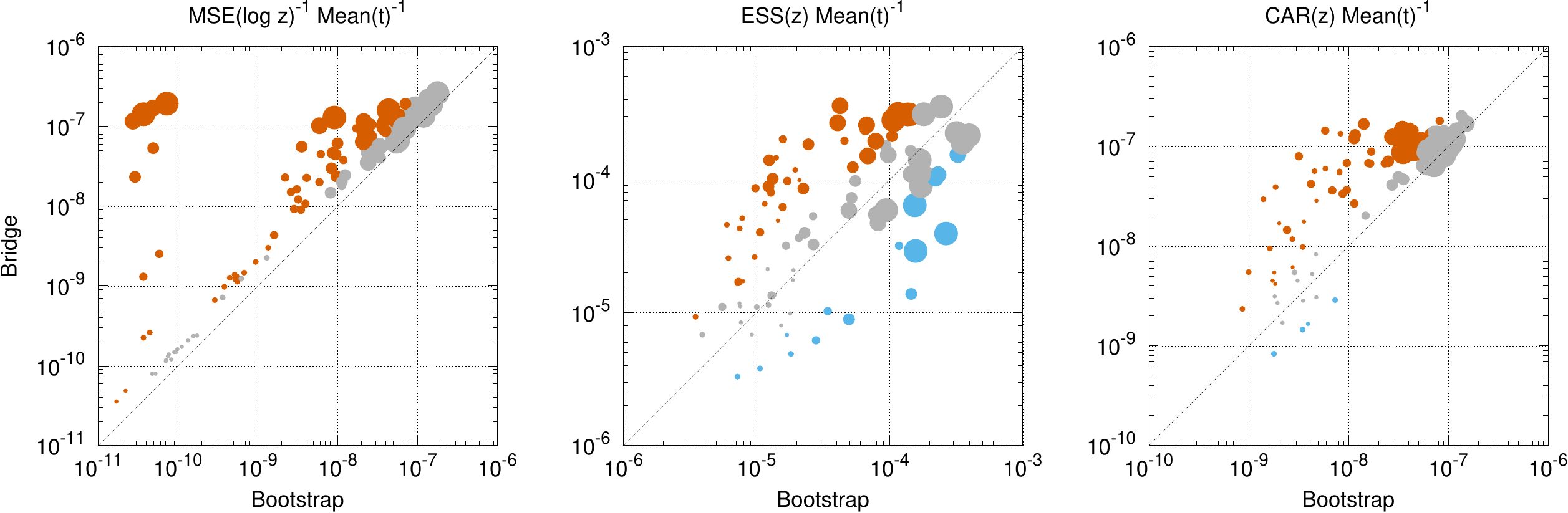}
\par\end{centering}

\protect\caption{Metrics for the PD example, comparing the bootstrap and bridge particle
filters. See Figure \ref{fig:LinearDriftBridge-metrics} caption for
explanation.\label{fig:periodicdrift-metrics}}
\end{figure}

\subsection{Epidemiological SIR model}

Consider an SIR (susceptible/infectious/recovered) model of an epidemic~\citep{Kermack1927},
where $S(t)$ gives the number of susceptible individuals in a population
over time, $I(t)$ the number of infectious individuals, and $R(t)$
the number of recovered individuals, parameterised by an infection
rate $\beta$ and recovery rate $\nu$:
\begin{eqnarray*}
\frac{dS(t)}{dt} & = & -\beta S(t)I(t)\\
\frac{dI(t)}{dt} & = & \beta S(t)I(t)-\nu I(t)\\
\frac{dR(t)}{dt} & = & \nu I(t).
\end{eqnarray*}
We introduce stochasticity into the system by allowing the original
parameters $\beta$ and $\nu$ to vary in time~\citep{Liu2012,Dureau2013},
following the Itô SDEs:
\begin{eqnarray*}
dS(t) & = & -\beta(t)S(t)I(t)\,dt\\
dI(t) & = & \left(\beta(t)S(t)I(t)-\nu(t)I(t)\right)\,dt\\
dR(t) & = & \nu(t)I(t)\,dt\\
d\log\beta(t) & = & \left(\theta_{\beta,1}-\theta_{\beta,2}\log\beta(t)\right)\,dt+\theta_{\beta,3}\,dW_{\beta}(t)\\
d\log\nu(t) & = & \left(\theta_{\nu,1}-\theta_{\nu,2}\log\nu(t)\right)\,dt+\theta_{\nu,3}\,dW_{\nu}(t).
\end{eqnarray*}
Note that $dS(t)+dI(t)+dR(t)=0$, so that total population is conserved,
and that $\beta(t)$ and $\nu(t)$ are always positive. %
For numerical simulation, the SDEs are converted to ODEs with a discrete-time
noise innovation:
\begin{eqnarray*}
\frac{dS(t)}{dt} & = & -\beta(t)S(t)I(t)\\
\frac{dI(t)}{dt} & = & \beta(t)S(t)I(t)-\nu(t)I(t)\\
\frac{dR(t)}{dt} & = & \nu(t)I(t)\\
\frac{d\log\beta(t)}{dt} & = & \theta_{\beta,1}-\theta_{\beta,2}\log\beta(t)+\theta_{\beta,3}\frac{\Delta W_{\beta}}{\Delta t}\\
\frac{d\log\nu(t)}{dt} & = & \theta_{\nu,1}-\theta_{\nu,2}\log\nu(t)+\theta_{\nu,3}\frac{\Delta W_{\nu}}{\Delta t}.
\end{eqnarray*}
where each noise term $\Delta W\sim\mathcal{N}(0,\Delta t)$ is an
increment of the Wiener process over a time step of size $\Delta t$.
These ODEs are then numerically integrated forward using a low-storage
fourth-order Runge--Kutta, with embedded third-order solution for
error control, denoted RK4(3)5{[}2R+{]}C~\citep{Carpenter1994}.

For the purposes of sampling diffusion bridges from the model, we
wish to establish an $\epsilon$-ball around observations within which
samples must fall, where $\epsilon$ is comparable to the discretisation
error in simulating the model forward. It will make matters unnecessarily
difficult to attempt to be any more accurate than this. The RK4(3)5{[}2R+{]}C
algorithm outputs an error estimate for each variable at each time
step, denoted $\epsilon_{S}(t)$, $\epsilon_{I}(t)$ and $\epsilon_{R}(t)$,
computed as the difference between its third and fourth order solutions.
A decison must then be made, according to these errors, whether to
accept or reject the step. The particular implementation~\citep{Murray2012}
in LibBi is based on the description of error control for the DOPRI5
method in \citet{Hairer1993}. It uses an error tolerance parameterised
by $\delta_{\text{abs}}$ (an absolute tolerance) and $\delta_{\text{rel}}$
(a relative tolerance). We set $\delta_{\text{abs}}=10^{-2}$ and
$\delta_{\text{rel}}=10^{-5}$. For $X=\{S,I,R\}$, these are used
to scale the error
\begin{equation}
\frac{\epsilon_{X}(t)}{\delta_{\text{abs}}+\delta_{\text{rel}}\left|X(t)\right|}.\label{eq:sir-error}
\end{equation}
The mean of these scaled errors is required to be less than one for
the step to be accepted. If the step is rejected, the step size is
reduced for a new attempt.

We suggest that it is sensible to use an $\epsilon$-interval on each
observed variable that is commensurate with this discretisation error.
We also note that the error estimate is of the \emph{local} error
of a single step of the numerical integrator, not of the \emph{cumulative}
error, which will be greater. We should therefore consider this estimate
conservative, and may choose to inflate $\epsilon$ accordingly.

Only $I$ is observed in the data set used below. We introduce an
indirect (albeit highly informative) observation $Y_{I}(t)$, with
uniform observation model:
\[
Y_{I}(t)\sim\mathcal{U}\left(I(t)-\epsilon(t),I(t)+\epsilon(t)\right).
\]
We can set $\epsilon(t)=\delta_{\text{abs}}+\delta_{\text{rel}}\left|I(t)\right|$,
using the same error threshold as in (\ref{eq:sir-error}), but this
leads to the peculiar situation where the interval is wider for larger
$I(t)$, so that the model grants higher likelihood to smaller $I(t)$.
This seems undesirable, so we remove the relative component. The overall
population in the data set to be used is 763, so that the maximum
error threshold is $\delta_{\text{abs}}+\delta_{\text{rel}}\times763=0.01763$.
Rounding up, we choose to set a constant $\epsilon=0.02$ for all
$t$. With this in place, the errors in the observation, like those
in numerical integration, are kept accurate to a small fraction of
an individual of the population. We do not claim that this selection
of $\epsilon$ is optimal, only that anything much less is futile
without also tightening the error tolerances on the numerical integrator.

An alternative interpretation of this is as an ABC rejection algorithm
with distance function $\rho(x(t),y(t)):=\left|I(t)-Y_{I}(t)\right|$
and acceptance criterion $\rho(x(t),y(t))\leq\epsilon$, with the
choice of $\epsilon$ guided by the discretisation error of the numerical
integrator.

We use Algorithm \ref{alg:algorithm3} with weight functions derived
from the Gaussian process approach described in §\ref{sec:implementation}.
The variance of the additional observation noise is set to $\varsigma^{2}(t)=\epsilon^{2}$.
The data set records an outbreak of Russian influenza in a boys boarding
school in northern England in 1978~\citep{Anonymous1978}. The data
set is also studied in \citet{Ross2009}.

We put prior distributions over the parameters:
\begin{eqnarray*}
\theta_{\beta,1} & \sim & \mathcal{U}(-100,100)\\
\theta_{\beta,2} & \sim & \Gamma(2,1)\\
\theta_{\beta,3} & \sim & \mathcal{U}(0,100)\\
\theta_{\nu,1} & \sim & \mathcal{U}(-100,100)\\
\theta_{\nu,2} & \sim & \Gamma(2,1)\\
\theta_{\nu,3} & \sim & \mathcal{U}(0,100),
\end{eqnarray*}
and the initial values of state variables:
\begin{eqnarray*}
S(0) & = & 763-Y_{I}(0)\\
I(0) & = & Y_{I}(0)\\
R(0) & = & 0\\
\log\beta(0) & \sim & \mathcal{N}\left(\frac{\theta_{1,\beta}}{\theta_{2,\beta}},\frac{\theta_{3,\beta}^{2}}{2\theta_{2,\beta}}\right)\\
\log\nu(0) & \sim & \mathcal{N}\left(\frac{\theta_{1,\nu}}{\theta_{2,\nu}},\frac{\theta_{3,\nu}^{2}}{2\theta_{2,\nu}}\right).
\end{eqnarray*}
Note that for $\log\beta(0)$ and $\log\nu(0)$, the prior is the
same as the stationary distribution.

In preliminary experiments, we find that the bootstrap particle filter
degenerates frequently for many settings of the parameters, while
the bridge particle filter is much more reliable. Consequently, a
PMMH chain using the bridge particle filter is configured as in Table
\ref{tab:config} and used for a pilot run. This pilot run is continued
until it appears to have converged to the posterior distribution,
and has drawn enough samples to fit an improved random-walk Gaussian
proposal. Initialised from the last state of the pilot run, we find
that the bootstrap particle filter can now work more reliably and
so may be used for a comparison. We run two chains, one using the
bootstrap and the other the bridge particle filter, both initialised
from the last state of the pilot run, and configured as in Table \ref{tab:config}.
Their resulting acceptance rates and effective sample sizes are reported
there also. The chain using the bridge particle filter clearly performs
better, although is slower also, given the additional resampling steps.
The posterior results from this chain are given in Figure \ref{fig:SIR-posterior}.

\begin{figure}[p]
\centering{}\includegraphics[width=1\textwidth]{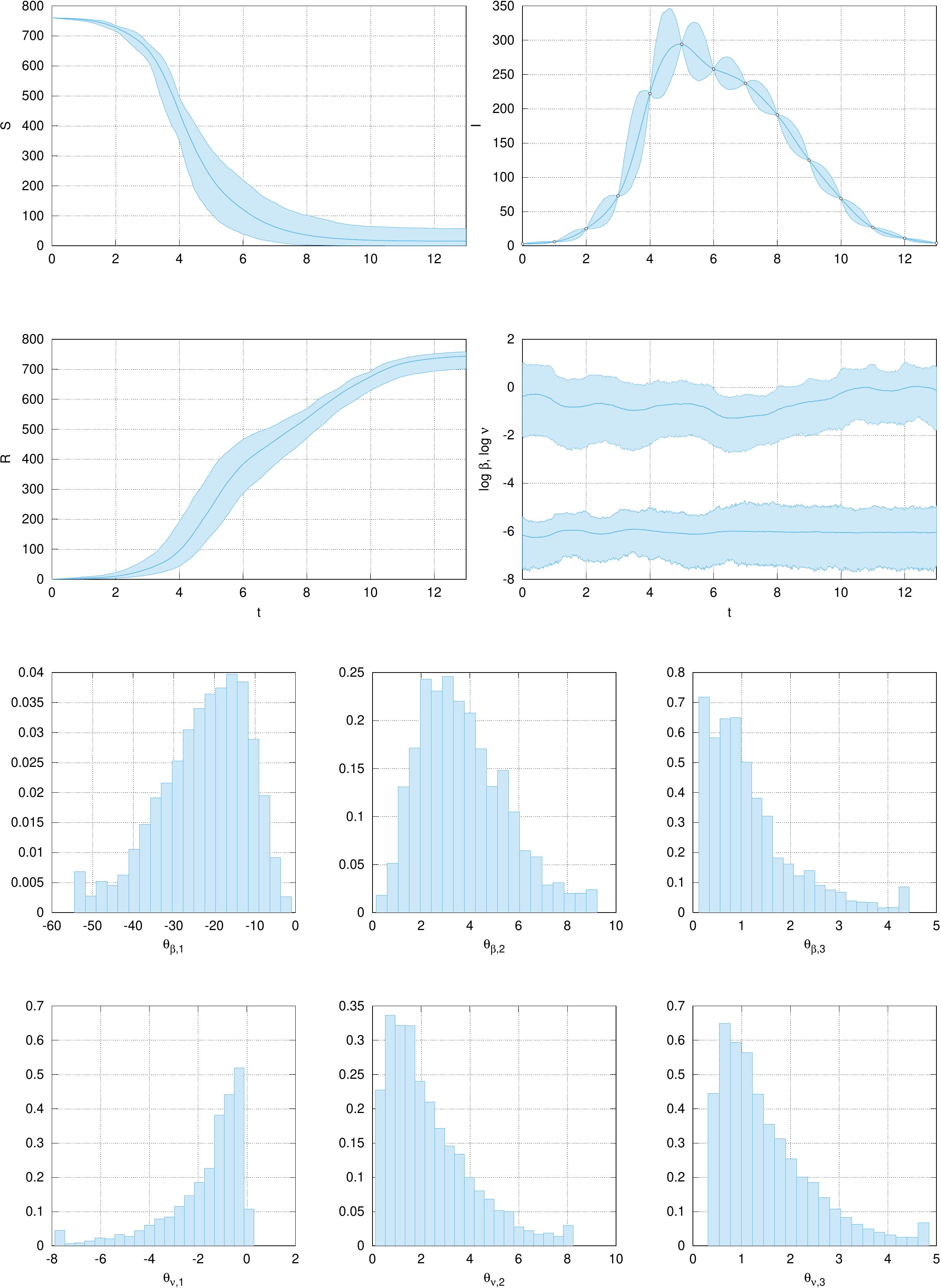}\protect\caption{Posterior distributions over state variables and parameters for the
SIR example, obtained with PMMH using the bridge particle filter.
For state variables, shaded regions indicate 95\% credibility intervals
at each time, and middle lines the medians. Only $I$ is observed,
with observations marked. For parameters, a normalised histogram is
given.\label{fig:SIR-posterior}}
\end{figure}

\subsection{Marine biogeochemical NPZD model}

Marine biogeochemical models are an important means of assessing ecosystem
health, especially in coastal environments. We adopt the \emph{NPZD
}model of \citet{Parslow2013}. The model is described in detail in
that work, and summarised in \citet{Murray2013a}. As even the summary
takes some pages, we give only the high level motivation here.

An NPZD model represents the interaction of nutrients ($N$), phytoplankton
($P$), zooplankton ($Z$) and detritus ($D$) in a body of water.
Each variable represents a compartment of a closed system, its value
representing the quantity of nitrogen contained in that compartment.
The four variables interact via a differential system where
\[
\frac{dN(t)}{dt}+\frac{dP(t)}{dt}+\frac{dZ(t)}{dt}+\frac{dD(t)}{dt}=0,
\]
such that the total quantity of nitrogen (in a closed system) is conserved.
The fluxes between compartments are nonlinear functions of nine stochastic
autoregressive terms, which model various biological, chemical and
physical processes on a discrete-time daily time step. A basic loop
is the absorption of nutrient by phytoplankton during growth, the
grazing of phytoplankton by zooplankton, the death of zooplankton
to produce detritus, and the remineralization of that detritus into
nutrient. The fluxes between variables are not limited to these particular
interactions, however.

The NPZD model is physically positioned somewhere in the open ocean,
within the surface mixed layer. There, it is subjected to exogenous
environmental forcings such as daily temperature and light availability,
and is opened by a bottom boundary condition that permits a flux of
nutrient from below.

The data set used is from the site of Ocean Station P in the north
Pacific. This data set has been studied before in \citet{Matear1995}
and \citet{Parslow2013}. We take four years of data, 1971--1974,
as in \citet{Parslow2013}. Observations include dissolved inorganic
nitrogen ($Y_{N}$), considered an observation of nutrient ($N$),
and chlorophyll-a fluorescence ($Y_{Chla}$), an observation of chlorophyll-a
($Chla$), itself a state variable that is a function of phytoplankton
($P$) and available light.

The observation model is:
\begin{eqnarray*}
\log Y_{N}(t) & \sim & \mathcal{N}(\log N(t),0.2)\\
\log Y_{Chla}(t) & \sim & \mathcal{N}(\log Chla(t),0.5).
\end{eqnarray*}
While this observation model may appear only weakly informative, it
can become highly informative given the sparsity (in time) at which
observations are available. While the state variables tend to vary
on daily or weekly time scales, the largest gap in observations of
$Y_{N}$ is 136 days, and that of $Y_{Chla}$ 101 days.

We use Algorithm \ref{alg:algorithm3}, with weight functions derived
from the Gaussian process approach described in §\ref{sec:implementation}.
Gaussian processes are fit to the logarithm of the observed time series
of $Y_{N}$ and $Y_{Chla}$.

The first comparison is of the normalising constant estimates of the
bootstrap and bridge particle filters, using the three metrics introduced
above. We simulate 16 parameter sets from the prior distribution.
The number of particles is set variously to $N\in\{2^{5},2^{6},\ldots,2^{10}\}$.
Each unique pair of a parameter set and an $N$ constitutes an experiment,
for 96 experiments in total. The bootstrap and bridge particle filters
are applied to each experiment 1024 times, each time producing an
estimate of the normalising constant. Each of the three metrics is
computed from these estimates, for 288 comparisons in total. For computing
the MSE-based metric, a bootstrap particle filter using $N=2^{20}$
is used to compute the ``exact'' log-likelihood for each parameter
set. Results are in Figure \ref{fig:NPZD-metrics}.

The second comparison is to perform parameter estimation using a PMMH
sampler. Two Markov chains are initialised from the same initial state,
obtained from a pilot run that is assessed to have converged to the
posterior distribution. The same proposal distribution is used for
both chains. They are otherwise configured as in Table \ref{tab:config},
where their acceptance rates and effective sample sizes are also reported.
The posterior distribution for the chain using the bridge particle
filter is given in Figure \ref{fig:NPZD-posterior}.

\begin{figure}[p]
\begin{centering}
\includegraphics[width=1\textwidth]{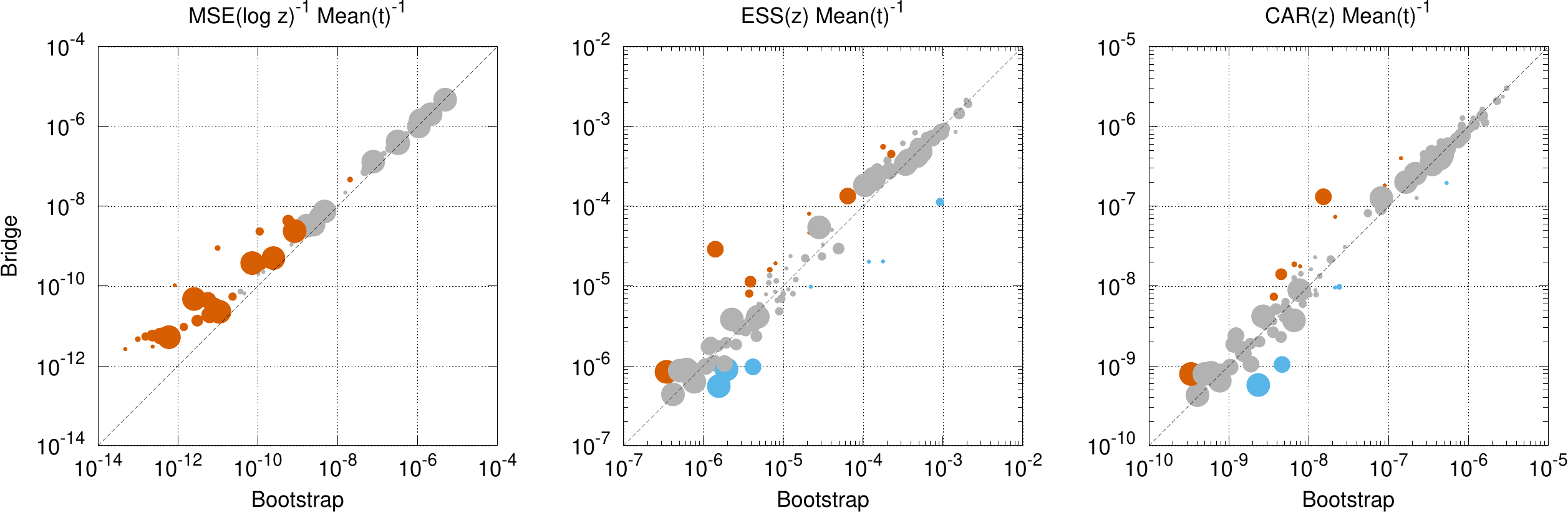}
\par\end{centering}

\protect\caption{Metrics for the NPZD example, comparing the bootstrap and bridge particle
filters. See Figure \ref{fig:LinearDriftBridge-metrics} caption for
explanation.\label{fig:NPZD-metrics}}
\end{figure}

\begin{figure}[p]
\begin{centering}
\includegraphics[width=1\textwidth]{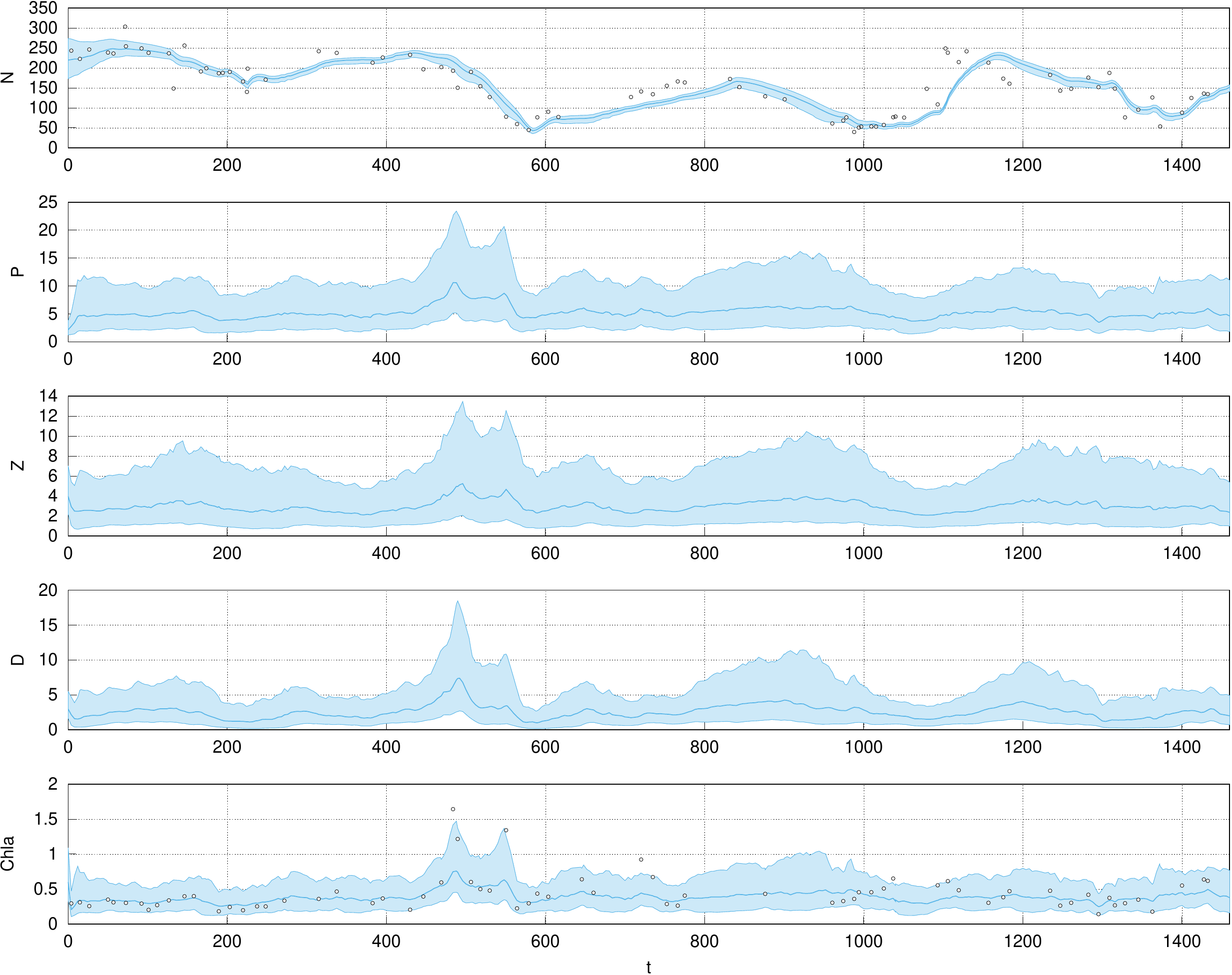}
\par\end{centering}

\protect\caption{Posterior distributions over state variables for the NPZD example,
obtained with PMMH using the bridge particle filter. Shaded regions
indicate 95\% credibility intervals at each time, and middle lines
the medians. Both $N$ and $Chla$ are observed, with observations
marked; others are unobserved.\label{fig:NPZD-posterior}}
\end{figure}

\section{Discussion\label{sec:discussion}}

Of the four examples where the bootstrap and bridge particle filters
are compared on metrics (OU, FFR, PD, NPZD), the bridge particle filter
is consistently superior on the MSE metric, and at least as good on
the ESS and CAR metrics. ESS is much better on the OU example, but
similar on the others. CAR is much better on the OU example, modestly
better on the FFR and PD examples, and similar on the NPZD example.
This is evident in Figures \ref{fig:LinearDriftBridge-metrics}, \ref{fig:FederalFundsRate-metrics},
\ref{fig:periodicdrift-metrics} \& \ref{fig:NPZD-metrics}. Recall
that these metrics are already adjusted for execution time, which
is typically longer for the bridge particle filter due to additional
weighting and resampling steps.

Of the three examples where a comparison of PMMH performance using
a real data set was attempted (FFR, SIR and NPZD), we were unable
to find a working configuration for the bootstrap particle filter
for the FFR example, and had difficulties configuring it for the SIR
example due to frequent degeneracy. The bridge particle filter worked
reliably in both cases, however. On the SIR and NPZD examples, the
bridge particle filter outperforms the bootstrap on both ESS and acceptance
rate (see Table \ref{tab:config}). This suggests that the bridge
particle filter can work well in cases where the bootstrap particle
filter does not.

We can report that the process of pilot runs---for tuning the number
of particles and proposal distribution---was less unpleasant than
usual when using the bridge particle filter. For the SIR example,
the bootstrap particle filter could not be made to work for such pilot
runs, so that the bridge particle filter was required for this purpose.
While anecdotal, this experience does affirm that the additional weighting
and resampling steps are useful, and may compensate for a poor setting
of parameters, or poorly fitting model.

There are a number of areas where care is needed in configuring the
bridge particle filter:
\begin{enumerate}
\item The additional resampling steps can decrease performance, by introducing
additional variance in the normalising constant estimate. The use
of an adaptive resampling trigger (such as the ESS used here) mitigates
this. In the empirical results of this work, the additional resampling
steps appear to have a net benefit, or in the worst cases, do no harm.
\item The weight functions used may be too tight, so that particles are
selected too aggressively at intermediate resamplings. For the PD,
SIR and NPZD examples we have taken the weight function to the power
1/4 as a precaution. This seems sufficient for the examples here,
but a more rigorous approach might be to use heavier-tailed weight
functions (e.g. a Student $t$).
\end{enumerate}
Finally, we have used a schedule of equispaced times for the additional
weighting and resampling steps. This may be wasteful of compute resources.
Alternative schedules may be superior, such as a geometric series
of decreasing interval length, so that resampling is more frequent
on approach to the observation. The schedule may even be adapted.
We have found these ideas unecessary to pursue for the examples here,
however, and so leave them to future work.

\section{Conclusion\label{sec:conclusion}}

This paper has presented three related SMC methods for handling state-space
models with highly-informative observations, including the special
case of sampling bridges between fixed initial and final values. We
have referred to them collectively as bridge particle filters. These
bridge particle filters appear to improve substantially on the bootstrap
particle filter in terms of the MSE of normalising constant estimates,
and more modestly on other metrics. For two applications, we have
been able to apply a PMMH sampler using bridge particle filters when
we either could not do so, or had difficulty doing so, with a bootstrap
particle filter, and we report anecdotally that these methods are
quite straightforward to configure.

\section*{Supplementary material}

All examples are available for download from the LibBi website (\url{www.libbi.org}).

\bibliographystyle{abbrvnat}
\bibliography{../../work/bib/bgc}

\end{document}